\documentclass[pra,aps,showpacs,onecolumn,twoside,superscriptaddress]{revtex4}



\usepackage{amsmath,amsfonts,amssymb,caption,color,epsfig,graphics,graphicx,hyperref,latexsym,mathrsfs,revsymb,theorem,url,verbatim,epstopdf,booktabs,multirow,tabularx,threeparttable,subfigure}
\usepackage{algorithm}
\usepackage{algorithmic}

\hypersetup{colorlinks,linkcolor={blue},citecolor={red},urlcolor={blue}}





\newtheorem{definition}{Definition}
\newtheorem{proposition}[definition]{Proposition}
\newtheorem{lemma}[definition]{Lemma}

\newtheorem{theorem}[definition]{Theorem}
\newtheorem{corollary}[definition]{Corollary}
\newtheorem{conjecture}[definition]{Conjecture}

\newtheorem{remark}[definition]{Remark}
\newtheorem{example}[definition]{Example}
\newtheorem{question}[definition]{Question}

\def\bcj{\begin{conjecture}}
\def\ecj{\end{conjecture}}
\def\bcr{\begin{corollary}}
\def\ecr{\end{corollary}}
\def\bd{\begin{definition}}
\def\ed{\end{definition}}
\def\bea{\begin{eqnarray}}
\def\eea{\end{eqnarray}}
\def\bem{\begin{enumerate}}
\def\eem{\end{enumerate}}
\def\bex{\begin{example}}
\def\eex{\end{example}}
\def\bim{\begin{itemize}}
\def\eim{\end{itemize}}
\def\bl{\begin{lemma}}
\def\el{\end{lemma}}
\def\bma{\begin{bmatrix}}
\def\ema{\end{bmatrix}}
\def\bpf{\begin{proof}}
\def\epf{\end{proof}}
\def\bpp{\begin{proposition}}
\def\epp{\end{proposition}}
\def\bqu{\begin{question}}
\def\equ{\end{question}}
\def\br{\begin{remark}}
\def\er{\end{remark}}
\def\bt{\begin{theorem}}
\def\et{\end{theorem}}


\def\squareforqed{\hbox{\rlap{$\sqcap$}$\sqcup$}}
\def\qed{\ifmmode\squareforqed\else{\unskip\nobreak\hfil
\penalty50\hskip1em\null\nobreak\hfil\squareforqed
\parfillskip=0pt\finalhyphendemerits=0\endgraf}\fi}
\def\endenv{\ifmmode\;\else{\unskip\nobreak\hfil
\penalty50\hskip1em\null\nobreak\hfil\;
\parfillskip=0pt\finalhyphendemerits=0\endgraf}\fi}
\newenvironment{proof}{\noindent \textbf{{Proof.~} }}{\qed}
\def\Dbar{\leavevmode\lower.6ex\hbox to 0pt
{\hskip-.23ex\accent"16\hss}D}
\makeatletter
\def\url@leostyle{%
  \@ifundefined{selectfont}{\def\UrlFont{\sf}}{\def\UrlFont{\small\ttfamily}}}
\makeatother
\urlstyle{leo}

\def\bcj{\begin{conjecture}}
\def\ecj{\end{conjecture}}
\def\bcr{\begin{corollary}}
\def\ecr{\end{corollary}}
\def\bd{\begin{definition}}
\def\ed{\end{definition}}
\def\bea{\begin{eqnarray}}
\def\eea{\end{eqnarray}}
\def\bem{\begin{enumerate}}
\def\eem{\end{enumerate}}
\def\bex{\begin{example}}
\def\eex{\end{example}}
\def\bim{\begin{itemize}}
\def\eim{\end{itemize}}
\def\bl{\begin{lemma}}
\def\el{\end{lemma}}
\def\bpf{\begin{proof}}
\def\epf{\end{proof}}
\def\bpp{\begin{proposition}}
\def\epp{\end{proposition}}
\def\bqu{\begin{question}}
\def\equ{\end{question}}
\def\br{\begin{remark}}
\def\er{\end{remark}}
\def\bt{\begin{theorem}}
\def\et{\end{theorem}}

\def\btb{\begin{tabular}}
\def\etb{\end{tabular}}

\newcommand{\nc}{\newcommand}


\def\a{\alpha}

\def\e{\epsilon}

\def\i{\iota}

\def\l{\lambda}

\def\s{\sigma}

 \nc{\bbA}{\mathbb{A}} \nc{\bbB}{\mathbb{B}} \nc{\bbC}{\mathbb{C}}
 \nc{\bbD}{\mathbb{D}} \nc{\bbE}{\mathbb{E}} \nc{\bbF}{\mathbb{F}}
 \nc{\bbG}{\mathbb{G}} \nc{\bbH}{\mathbb{H}} \nc{\bbI}{\mathbb{I}}
 \nc{\bbJ}{\mathbb{J}} \nc{\bbK}{\mathbb{K}} \nc{\bbL}{\mathbb{L}}
 \nc{\bbM}{\mathbb{M}} \nc{\bbN}{\mathbb{N}} \nc{\bbO}{\mathbb{O}}
 \nc{\bbP}{\mathbb{P}} \nc{\bbQ}{\mathbb{Q}} \nc{\bbR}{\mathbb{R}}
 \nc{\bbS}{\mathbb{S}} \nc{\bbT}{\mathbb{T}} \nc{\bbU}{\mathbb{U}}
 \nc{\bbV}{\mathbb{V}} \nc{\bbW}{\mathbb{W}} \nc{\bbX}{\mathbb{X}}
 \nc{\bbZ}{\mathbb{Z}}


 \nc{\bA}{{\bf A}} \nc{\bB}{{\bf B}} \nc{\bC}{{\bf C}}
 \nc{\bD}{{\bf D}} \nc{\bE}{{\bf E}} \nc{\bF}{{\bf F}}
 \nc{\bG}{{\bf G}} \nc{\bH}{{\bf H}} \nc{\bI}{{\bf I}}
 \nc{\bJ}{{\bf J}} \nc{\bK}{{\bf K}} \nc{\bL}{{\bf L}}
 \nc{\bM}{{\bf M}} \nc{\bN}{{\bf N}} \nc{\bO}{{\bf O}}
 \nc{\bP}{{\bf P}} \nc{\bQ}{{\bf Q}} \nc{\bR}{{\bf R}}
 \nc{\bS}{{\bf S}} \nc{\bT}{{\bf T}} \nc{\bU}{{\bf U}}
 \nc{\bV}{{\bf V}} \nc{\bW}{{\bf W}} \nc{\bX}{{\bf X}}
 \nc{\bZ}{{\bf Z}}


\nc{\cA}{{\cal A}} \nc{\cB}{{\cal B}} \nc{\cC}{{\cal C}}
\nc{\cD}{{\cal D}} \nc{\cE}{{\cal E}} \nc{\cF}{{\cal F}}
\nc{\cG}{{\cal G}} \nc{\cH}{{\cal H}} \nc{\cI}{{\cal I}}
\nc{\cJ}{{\cal J}} \nc{\cK}{{\cal K}} \nc{\cL}{{\cal L}}
\nc{\cM}{{\cal M}} \nc{\cN}{{\cal N}} \nc{\cO}{{\cal O}}
\nc{\cP}{{\cal P}} \nc{\cQ}{{\cal Q}} \nc{\cR}{{\cal R}}
\nc{\cS}{{\cal S}} \nc{\cT}{{\cal T}} \nc{\cU}{{\cal U}}
\nc{\cV}{{\cal V}} \nc{\cW}{{\cal W}} \nc{\cX}{{\cal X}}
\nc{\cZ}{{\cal Z}}


\nc{\hA}{{\hat{A}}} \nc{\hB}{{\hat{B}}} \nc{\hC}{{\hat{C}}}
\nc{\hD}{{\hat{D}}} \nc{\hE}{{\hat{E}}} \nc{\hF}{{\hat{F}}}
\nc{\hG}{{\hat{G}}} \nc{\hH}{{\hat{H}}} \nc{\hI}{{\hat{I}}}
\nc{\hJ}{{\hat{J}}} \nc{\hK}{{\hat{K}}} \nc{\hL}{{\hat{L}}}
\nc{\hM}{{\hat{M}}} \nc{\hN}{{\hat{N}}} \nc{\hO}{{\hat{O}}}
\nc{\hP}{{\hat{P}}} \nc{\hR}{{\hat{R}}} \nc{\hS}{{\hat{S}}}
\nc{\hT}{{\hat{T}}} \nc{\hU}{{\hat{U}}} \nc{\hV}{{\hat{V}}}
\nc{\hW}{{\hat{W}}} \nc{\hX}{{\hat{X}}} \nc{\hZ}{{\hat{Z}}}

\nc{\hn}{{\hat{n}}}



























\def\dim{\mathop{\rm Dim}}



\def\max{\mathop{\rm max}}
\def\min{\mathop{\rm min}}





\def\tr{\mathop{\rm Tr}}



\newcommand{\bra}[1]{\langle#1|}
\newcommand{\ket}[1]{|#1\rangle}
\newcommand{\proj}[1]{| #1\rangle\!\langle #1 |}
\newcommand{\ketbra}[2]{|#1\rangle\!\langle#2|}

\newcommand{\tbc}{\red{TO BE CONTINUED...}}

\newcommand{\opp}{\red{OPEN PROBLEMS}.~}


\newcommand{\red}{\textcolor{red}}
















\def\Dbar{\leavevmode\lower.6ex\hbox to 0pt
{\hskip-.23ex\accent"16\hss}D}

\begin{document}

\title{Construction of multipartite unextendible product bases and geometric measure of entanglement of  positive-partial-transpose entangled states}

\date{\today}

\pacs{03.65.Ud, 03.67.Mn}

\author{Yize Sun}\email[]{sunyize@buaa.edu.cn}
\affiliation{LMIB and School of Mathematical Sciences, Beihang University, Beijing 100191, China}

\author{Baoshan Wang}\email[]{bwang@buaa.edu.cn}
\affiliation{LMIB and School of Mathematical Sciences, Beihang University, Beijing 100191, China}
\affiliation{International Research Institute for Multidisciplinary Science, Beihang University, Beijing 100191, China}

\author{Shiru Li}\email[]{lishiru@buaa.edu.cn}
\affiliation{LMIB and School of Mathematical Sciences, Beihang University, Beijing 100191, China}

\begin{abstract}
In quantum information theory, it is a fundamental problem to construct multipartite unextendible product bases (UPBs). We show that there exist two families UPBs in Hilbert space $\mathbb{C}^2\otimes\mathbb{C}^2\otimes\mathbb{C}^2\otimes\mathbb{C}^2\otimes\mathbb{C}^2\otimes\mathbb{C}^4$ by merging two different systems of an existing $7$-qubit UPB of size $11$. Moreover, 
a new family of $7$-qubit positive-partial-transpose (PPT) entangled
states of rank $2^7-11$ is constructed. We analytically derive a geometric measure of entanglement of a special PPT entangled states. Also an upper bound are given by two methods.
\end{abstract}


\maketitle


\section{Introduction}

The unextendible product bases (UPBs) in quantum information have been found useful in various applications \cite{1998Quantum,1999Unextendible,2003Unextendible,2012A,2011Unextendible}. Among various UPBs, the multiqubit UPBs have received extensive attentions \cite{M2012Realization}, because the multiqubit system is the mostly realizable system in experiments. For example, Bravyi and Johnston  have constructed three and four-qubit UPBs, respectively \cite{Johnston_2014,2004Unextendible}. By using the connection between UPBs and
orthogonality graphs, the four-qubit UPBs have been fully classified assisted by programmes \cite{Johnston_2014}. As far as we know, the connection between multiqubit UPBs and multipartite UPBs of higher  dimensions has been little studied. Since the latter one is harder to construct, the connection is  important for constructions. It is useful to systematically study the multipartite positive-partial-transpose (PPT) entangled states by 
using the known UPBs. Hence it is an interesting problem to construct novel PPT entangled states from multipartite UPBs.  This is the main motivation of this paper. 

Optimization theory has many applications in quantum computation \cite{2001Applying,2008Rovibrational}, quantum-state tomography \cite{2003Mode}, quantum key distribution \cite{Xing2017Implementing,2022Improving} and so on. Devetak and Winter have shown the distilling common randomness based on bipartite quantum states by using single-letter optimization \cite{2003Distilling}. To our best knowledge, it is not well-studied to apply this theory to calculate numerical value of entanglement  of quantum multipartite states. On the other hand, geometric measure can be used to compute the degree to an entangled pure quantum state, which is  characterized by the distance or angle to the nearest unentangled state \cite{2003Geometric,2002Geometric,2008Geometric}. We further consider to combine geometric measure with optimization theory to compute the degree of entanglement of multipartite PPT states. This is our second motivation.


In this paper, we apply the results of the $7$-qubit UPB of size $11$ \cite{2020The1} to construct two families UPBs in space $\mathbb{C}^2\otimes\mathbb{C}^2\otimes\mathbb{C}^2\otimes\mathbb{C}^2\otimes\mathbb{C}^2\otimes\mathbb{C}^4$. In Lemma \ref{le:2} and \ref{le:3}, we show that there exist exactly two $6$-partite UPBs by merging system $A_1A_2$ and $A_2A_3$, respectively. Based on the two lemmas, we further investigate whether the $7$-qubit UPB by merging the two remaining systems in Table \ref{tab:perfor} is a $6$-partite UPB or not. In Theorem \ref{thm:main}, we shall show that there does not exist $6$-partite UPBs except the two cases in Lemma \ref{le:2} and \ref{le:3}. Moreover, by extending to merge more systems, in Theorem \ref{thm:25}, the results present that there does not exist $n$-partite UPBs for $n<6$. 
As an application, we construct a $7$-qubit PPT entangled state of rank $2^7-11$. Based on the optimization theory of steepest descent method, Theorem \ref{thm:app1} gives a concrete numerical upper bound of the geometric measure of entanglement of both these states. Then from the viewpoint of the probability, we use Python to analyze the rationality of this upper bound. Our results shows the latest progress on the construction of multipartite UPBs and PPT entangled states. 

The structure of the paper is as follows. In Sec. \ref{sect:1}, we introduce the preliminary knowledge, which is necessary in this paper. In Sec. \ref{sec:main}, we construct two familes UPBs and show that there does not exist $n$-partite UPBs ($n<6$) constructed by $7$-qubit UPB of size $11$. In Sec. \ref{sec:app}, we establish a PPT entangled state and investigate the geometric measure of entanglement for the PPT state by using two different methods. Finally
we conclude in Sec. \ref{sect:conclu}.

\section{Preliminaries}
\label{sect:1}
In this section, we introduce the notions and facts used in this paper. In quantum mechanics, an $n$-partite quantum system with parties $A_1,\cdots,A_n$ is characterized
by a Hilbert space $\mathcal{H}=\mathcal{H}_{1}\otimes\cdots\otimes\mathcal{H}_{n}$. A product vector $\ket{\varphi}\in\mathcal{H}$ is an $n$-partite nonzero unit vector denoted as $\ket{\varphi}= \ket{\varphi_1}\otimes\cdots\otimes\ket{\varphi_n}$ with $\ket{\varphi_i}\in\mathcal{H}_{i}$. Specially, for $\dim \mathcal{H}_i=2$, it is an $n$-qubit space. For the sake of simplicity, we write it as $\ket{\varphi_1,\cdots,\varphi_n}$. Then we introduce the definition of unextendible product basis (UPB). For a set of $n$-partite orthonormal product vectors $\{\ket{\varphi_{i,1},\cdots,\varphi_{i,n}}\}$, if there is no $n$-partite product vector orthogonal
to all product vectors in the set, then it is a UPB. In particular, any orthonormal product basis in $\mathbb{C}^d$ is a UPB with size $\dim d$ and it is trivial. So we just consider the UPBs with size smaller than $\dim d$. For $7$-qubit space, we denote $\mathbb{C}^2\otimes\cdots\otimes\mathbb{C}^2$ as $\mathcal{H}_{A_1}\otimes\cdots\otimes\mathcal{H}_{A_7}$. For simplicity,  $\mathcal{H}_{A_1}\otimes\mathcal{H}_{A_2}:=\mathcal{H}_{A_1A_2}$. We take the vectors $\ket{0}:=\bma1\\0\ema$ and $\ket{1}:=\bma0\\1\ema$. Clearly, the set $\{\ket{0},\ket{1}\}$ is an orthonormal basis in $\mathbb{C}^2$. Generally, we may denote a  orthonormal basis by $\{\ket{x},\ket{x'}\}$, where $x= a,b,\cdots$. 
For a UPB, if we permutate the systems or perform any local unitary transformation, then it remains a UPB \cite{2020The}. We apply this fact in the following sections.

\section{The existence of $n$-partite UPBs constructed by $7$-qubit UPBs for $n<7$}
\label{sec:main}

We construct a $6$-partite UPB by using an existing $7$-qubit UPB of size $11$ \cite{2021The}. For this purpose we introduce the unextendible orthogonal matrices (UOMs) \cite{Chen2018Multiqubit}. We take product vectors of a $n$-partite UPB of size $m$ as row vectors of an $m\times n$ matrix, in which we shall denote orthogonal vectors $\ket{a}$ and $\ket{a'}$
by $a$ and $a'$ in UOMs and vice versa, so that the matrix is a UOM. 
Since every UPB corresponds to a UOM \cite{2020The}. We introduce the  $7$-qubit UPB of size $11$ corresponding to $11\times7$ UOM
\begin{eqnarray}
	\label{uom:1}
A =
\bma
a_{1,1}&a_{1,2}&a_{1,3}&a_{1,4}&a_{1,5}&a_{1,6}&a_{1,7}\\
a_{1,1}&a_{2,2}&a_{2,3}&a_{2,4}&a_{1,5}'&a_{2,6}&a_{2,7}\\	a_{1,1}'&a_{2,2}&a_{1,3}&a_{3,4}&a_{3,5}&a_{3,6}&a_{3,7}\\
a_{1,1}'&a_{1,2}&a_{2,3}&a_{4,4}&a_{4,5}&a_{4,6}&a_{3,7}'\\
a_{4,1}&a_{1,2}'&a_{5,3}&a_{2,4}'&a_{5,5}&a_{3,6}'&a_{5,7}\\
a_{4,1}&a_{1,2}'&a_{1,3}&a_{3,4}'&a_{6,5}&a_{2,6}'&a_{5,7}'\\
a_{4,1}'&a_{7,2}&a_{2,3}'&a_{7,4}&a_{3,5}'&a_{1,6}'&a_{7,7}\\
a_{4,1}'&a_{7,2}'&a_{2,3}'&a_{3,4}'&a_{6,5}&a_{8,6}&a_{1,7}'\\
a_{9,1}&a_{7,2}'&a_{1,3}'&a_{4,4}'&a_{5,5}'&a_{8,6}'&a_{2,7}'\\
a_{9,1}&a_{7,2}&a_{1,3}'&a_{7,4}'&a_{4,5}'&a_{2,6}'&a_{5,7}'\\
a_{9,1}'&a_{2,2}'&a_{5,3}'&a_{1,4}'&a_{6,5}'&a_{4,6}'&a_{7,7}'
\ema.
\end{eqnarray}
Suppose that the $7$-qubit UPB of size $11$ is in space $\mathcal{H}_{A_1}\otimes\cdots\otimes\mathcal{H}_{A_7}$. We merge the two systems of $A_i,i=1,\cdots,7$ such that $A$ in \eqref{uom:1} corresponds to a set of  orthonormal product vectors in space $\mathbb{C}^2\otimes\mathbb{C}^2\otimes\mathbb{C}^2\otimes\mathbb{C}^2\otimes\mathbb{C}^2\otimes\mathbb{C}^4$. For example, the merged  systems $A_1,A_2$ in \eqref{uom:1} implies that the set $\{\ket{a_{1,1},a_{1,2}}\ket{a_{1,3}}\ket{a_{1,4}}\ket{a_{1,5}}\ket{a_{1,6}}\ket{a_{1,7}},\cdots,\ket{a_{9,1}',a_{2,2}'}\ket{a_{5,3}'}\ket{a_{1,4}'}\ket{a_{6,5}'}\ket{a_{4,6}'}\ket{a_{7,7}'}\}$. There exist $21$ ways for the merged systems up to system permutation. For the convenience, we list them in Table \ref{tab:perfor}.
\renewcommand{\arraystretch}{1.5} 
\begin{table*}[htp]
		\centering
	\fontsize{8}{10}\selectfont
	\begin{threeparttable}
		\caption{The merged ways of two systems.}
		\label{tab:perfor}
		\begin{tabular}{ccccccccccc}
			\hline
			\hline
			\multirow{1}{*}{System}&$A_1$
			&$A_2$&$A_3$&$A_4$&$A_5$&$A_6$&$A_7$\cr
			\hline
			\multirow{1}{*}{$A_1$}&-
			&$A_1A_2$&$A_1A_3$&$A_1A_4$&$A_1A_5$&$A_1A_6$&$A_1A_7$\cr
			\hline
				\multirow{1}{*}{$A_2$}&-
			&-&$A_2A_3$&$A_2A_4$&$A_2A_5$&$A_2A_6$&$A_2A_7$\cr
			\hline
			\multirow{1}{*}{$A_3$}&-
			&-&-&$A_3A_4$&$A_3A_5$&$A_3A_6$&$A_3A_7$\cr
			\hline
				\multirow{1}{*}{$A_4$}&-
			&-&-&-&$A_4A_5$&$A_4A_6$&$A_4A_7$\cr
			\hline
			\multirow{1}{*}{$A_5$}&-
			&-&-&-&-&$A_5A_6$&$A_5A_7$\cr
			\hline
			\multirow{1}{*}{$A_6$}&-
			&-&-&-&-&-&$A_6A_7$\cr
			\hline
			\multirow{1}{*}{$A_7$}&-
			&-&-&-&-&-&-\cr
			\hline
			\hline
		\end{tabular}
	\end{threeparttable}
\end{table*}
Then we present the following observation.

\begin{lemma}
	\label{le:2}
When systems are merged $A_1$ and $A_2$, the $7$-qubit UPB of size $11$ is a $6$-partite UPB in this case. 
\end{lemma}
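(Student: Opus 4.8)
The plan is to combine the standard ``covering'' criterion for unextendibility with the combinatorial data in the UOM $A$ of \eqref{uom:1}. To begin, note that merging systems $A_1$ and $A_2$ leaves every inner product $\braket{\varphi_i}{\varphi_j}$ unchanged, so the $11$ vectors stay orthonormal in $\mathbb{C}^2\otimes\mathbb{C}^2\otimes\mathbb{C}^2\otimes\mathbb{C}^2\otimes\mathbb{C}^2\otimes\mathbb{C}^4$ and only unextendibility has to be checked. A set of orthonormal product vectors fails to be a UPB exactly when some product test vector $\ket{\psi}=\ket{e}_{A_1A_2}\otimes\ket{\psi_3}\otimes\cdots\otimes\ket{\psi_7}$ (with $\ket{e}\in\mathbb{C}^4$ and the remaining factors qubit vectors) is orthogonal to all $\ket{\varphi_i}$; equivalently, when $\{1,\dots,11\}$ can be split into six blocks $S_{A_1A_2},S_{A_3},\dots,S_{A_7}$ so that the local vectors of each block fail to span the corresponding local space. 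I want to show no such split exists.

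First I would dispose of the five qubit blocks. For a qubit factor ``failing to span'' means all entries in that column of $A$ carry the same symbol, i.e.\ the block lies inside a single symbol class. Reading columns $A_3,\dots,A_7$ of \eqref{uom:1}, the symbol classes have sizes at most $3,2,2,2,2$, and the size-$\ge 2$ classes overlap in row $6$ (the class $\{1,3,6\}$ in column $A_3$ and the classes $\{6,8\},\{6,8\},\{6,10\},\{6,10\}$ in columns $A_4,\dots,A_7$). A brief bookkeeping argument then shows that one monochromatic class from each of these five columns covers at most $7$ rows, so in any split the block $S:=S_{A_1A_2}$ satisfies $\abs S\ge 4$.

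The crux is then to prove that for every such admissible $S$ the merged vectors $\{\ket{a_{i,1},a_{i,2}}:i\in S\}$ span $\mathbb{C}^4$ — whence $S$ does span and the split is not of the forbidden type. I would argue on the Schmidt rank of $\ket e$. If $\ket e$ is a product vector, then $\ket\psi$ is a genuine $7$-qubit product vector orthogonal to the original $7$-qubit UPB of \cite{2020The1}, which is impossible. If $\ket e$ has Schmidt rank $2$, the product vectors orthogonal to it form the graph of a single invertible Möbius map $\phi$, so every $i\in S$ satisfies $\ket{a_{i,2}}\parallel\phi(\ket{a_{i,1}})$. Inspecting $A$: two rows in $S$ sharing the column-$A_1$ symbol but not the column-$A_2$ symbol (or vice versa) are then impossible; the only row-pair with identical merged $A_1A_2$-vector is $\{5,6\}$; and $6\in S$ already forces $\abs S\ge 5$. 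Hence $S$ always contains at least four rows with pairwise distinct merged vectors, and hence pairwise distinct $\ket{a_{i,1}}$. Since a Möbius map is pinned by three points, the fourth row imposes a nontrivial polynomial constraint on the chosen vectors, which fails for the explicit construction of \cite{2020The1} (whose column-$A_1$ and column-$A_2$ vectors are in general position). This is a finite check over the few admissible sets $S$.

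I expect this last step to be the main obstacle: it is not forced by the orthogonality pattern of $A$ alone, and one really must use that the four merged product vectors do not all lie in a common hyperplane of $\mathbb{C}^4$ (equivalently, that they do not happen to sit on a common Möbius graph). Granting it, in any six-block split either some qubit block is non-monochromatic and spans its qubit, or $\abs S\ge 4$ and $S$ spans $\mathbb{C}^4$; in all cases some block spans its local space, so no product test vector is orthogonal to all $11$ vectors and the merged set is a UPB.
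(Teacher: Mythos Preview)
Your overall strategy coincides with the paper's: both use the column data of $A$ to bound the qubit blocks to at most seven rows, leaving a set $S$ of size $\ge 4$ whose merged $A_1A_2$-vectors must fail to span $\mathbb{C}^4$. The execution differs. The paper parametrises the $A_1,A_2$ vectors by angles, uses Matlab to list all $44$ four-row subsets whose $4\times4$ determinant vanishes identically (Table~\ref{tab:1}), observes that every such subset has one of the two shapes ``two rows share an $A_1$-label and two share an $A_2$-label'' or ``contains the identical pair $\{5,6\}$'', and then checks by hand that none of these $44$ subsets can arise as the complement of a seven-row qubit cover; the cases $\abs S=5,6$ are dismissed because no five rows have all their $4$-subsets in the table.

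Your Schmidt-rank split reaches the same reduction more structurally. The rank-$1$ branch --- $\ket e$ product, hence $\ket\psi$ a genuine $7$-qubit product vector orthogonal to the original UPB --- is a genuine shortcut absent from the paper; it kills the first shape in Table~\ref{tab:1} without any computation. Your rank-$2$ M\"obius analysis is equivalent to isolating the second shape: the forbidden $A_1$- and $A_2$-pairs force four distinct $A_1$-labels inside $S$, and the cross-ratio obstruction is exactly the statement that those four merged vectors are generically independent. One caution on your case split: the implication ``$6\in S\Rightarrow\abs S\ge 5$'' is valid only when $S$ is read as the complement of the qubit cover (so row $6$ is genuinely uncovered), not as the full set $T_e=\{i:\braket{e}{a_{i,1},a_{i,2}}=0\}$; with that reading the argument is sound. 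The ``finite check'' you flag as the main obstacle is precisely the content the paper supplies via its explicit Matlab enumeration, and both arguments hold only for generic values of the UOM parameters --- the paper makes the same restriction explicit before its case analysis.
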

\begin{proof}
Suppose that the two merged systems are $A_1$ and  $A_2$. We may assume that the form of product vectors corresponding to $A$ in \eqref{uom:1} is 
	\begin{eqnarray}
		\label{eq:sixhcg}
		\ket{g_{j,1},\cdots,g_{j,6}}\in\mathcal{H}_{A_3}\otimes\cdots\otimes\mathcal{H}_{A_7}\otimes\mathcal{H}_{A_1 A_2},\quad j=1,\cdots,11.
	\end{eqnarray}
	Suppose that there exists a product vector $\ket{\varphi_1,\cdots,\varphi_6}$ such that it is orthogonal to all $\ket{g_{j,1},\cdots,g_{j,6}}$. For the vectors on systems $A_1,A_2$, namely the entries in the first and second columns of $A$ in \eqref{uom:1}, one can verify that the multiplicity of entries is at most two. We show an $11\times2$ submatrix with systems $A_1,A_2$ of $A$ as
	\begin{eqnarray}
		\label{eq:labelab}
		\bma
		a_{1,1}&a_{1,2}\\
		a_{1,1}&a_{2,2}\\	a_{1,1}'&a_{2,2}\\
		a_{1,1}'&a_{1,2}\\
		a_{4,1}&a_{1,2}'\\
		a_{4,1}&a_{1,2}'\\
		a_{4,1}'&a_{7,2}\\
		a_{4,1}'&a_{7,2}'\\
		a_{9,1}&a_{7,2}'\\
		a_{9,1}&a_{7,2}\\
		a_{9,1}'&a_{2,2}'
		\ema.
	\end{eqnarray}
We may assume that
\begin{eqnarray}
	\label{eq:12sincos}
\ket{a_{1,1}} &=& \bma \sin a_1\\\cos a_1\ema, \ket{a_{1,1}'} = \bma-\cos a_1\\\-\sin a_1\ema, \ket{a_{4,1}}=\bma\sin a_2\\\cos a_2\ema,\ket{a_{4,1}'}=\bma-\cos a_2\\\sin a_2\ema,\nonumber\\
\ket{a_{9,1}} &=& \bma\sin a_3\\\cos a_3\ema,\ket{a_{9,1}'}=\bma-\cos a_3\\\sin a_3\ema, \ket{a_{1,2}}=\bma \sin b_1\\\cos b_1\ema,\ket{a_{1,2}'}=\bma-\cos b_1\\\sin b_1\ema,\nonumber\\
\ket{a_{2,2}}&=&\bma \sin b_2\\\cos b_2\ema,\ket{a_{2,2}'}=\bma-\cos b_2\\\sin b_2\ema,\ket{a_{7,2}}=\bma\sin b_3\\\cos b_3\ema,\ket{a_{7,2}'}=\bma-\cos b_3\\\sin b_3\ema,
\end{eqnarray}
where $a_1,a_2,a_3,b_1,b_2,b_3\in[0,2\pi)$. 
Consider the number of rows, which is  orthogonal to $\ket{\varphi_1,\cdots,\varphi_5}\in\mathcal{H}_{A_3}\otimes\cdots\otimes\mathcal{H}_{A_7}$. In the column $4-7$ of $A$, the entries with multiplicity two are $\ket{a_{3,4}'},\ket{a_{6,5}},\ket{a_{2,6}'}$ and $\ket{a_{5,7}'}$. Moreover, $\ket{a_{3,4}'},\ket{a_{6,5}}$ are in row $6$ and $8$, and $\ket{a_{2,6}'},\ket{a_{5,7}'}$ are in row $6$ and $10$. Since the entries with multiplicity two are all in row $6$, we obtain that $\ket{\varphi_2,\varphi_3,\varphi_4,\varphi_5}$ is orthogonal at most five rows of $A$. On system $A_3$, the multiplicity of entries in column $3$ is at most three. For the entry with multiplicity three, it is $\ket{a_{1,3}}$, which is in row $1,3,6$. Since the entries with multiplicity two in column $4-7$ are all in row $6$, we obtain that $\ket{\varphi_1,\cdots,\varphi_5}$  is orthogonal to at most seven rows. So there are three cases to be considered: (a) $\ket{\varphi_1,\cdots,\varphi_5}$ is orthogonal to exactly seven rows, (b) $\ket{\varphi_1,\cdots,\varphi_5}$ is orthogonal to exactly six rows and (c) $\ket{\varphi_1,\cdots,\varphi_5}$ is orthogonal to exactly five rows.
	
	Since $\ket{\varphi_6}\in\mathbb{C}^4$ is on systems $A_1,A_2$, it  is orthogonal to at most three linearly independent vectors, which corresponding to three different rows of matrix in \eqref{eq:labelab}. If $\ket{\varphi_6}$ is orthogonal to four row vectors, then the matrix with the four rows has determinant zero. 
	Before we consider the following cases (a), (b) and (c), we only investigate that the determinant of the matrix constructed by four rows is zero in whole range $[0,2\pi)$, rather than the specific points in range $[0,2\pi)$. For example, when $\ket{\varphi_6}$ is orthogonal to row $1,3,5,11$, it implies that the determinant of $\bma a_{1,1}&a_{1,2}\\a_{1,1}'&a_{2,2}\\a_{4,1}&a_{1,2}'\\a_{4,1}'&a_{7,2}\ema$ is zero. Based on \eqref{eq:12sincos}, we obtain the determinant is 
	\begin{eqnarray}
		\label{eq:fun}
		\frac{1}{4}(\cos(2a_1-a_2-a_3)-3\cos(a_2-a_3)-2\cos(2(b_1-b_2))\sin(a_1-a_2)\sin(a_1-a_3))
	\end{eqnarray}
	Then we consider the existence of function in \eqref{eq:fun} with determinant zero. Set $a_1=2.4,a_2=1.8,a_3=0.4,b_1=4.7$, and we plot the function in \eqref{eq:fun} about $b_2$ in Fig. \ref{fig}. 
	\begin{figure}[H]
		\subfigure[]{
			\label{fig}
			\includegraphics[width=8.5cm]{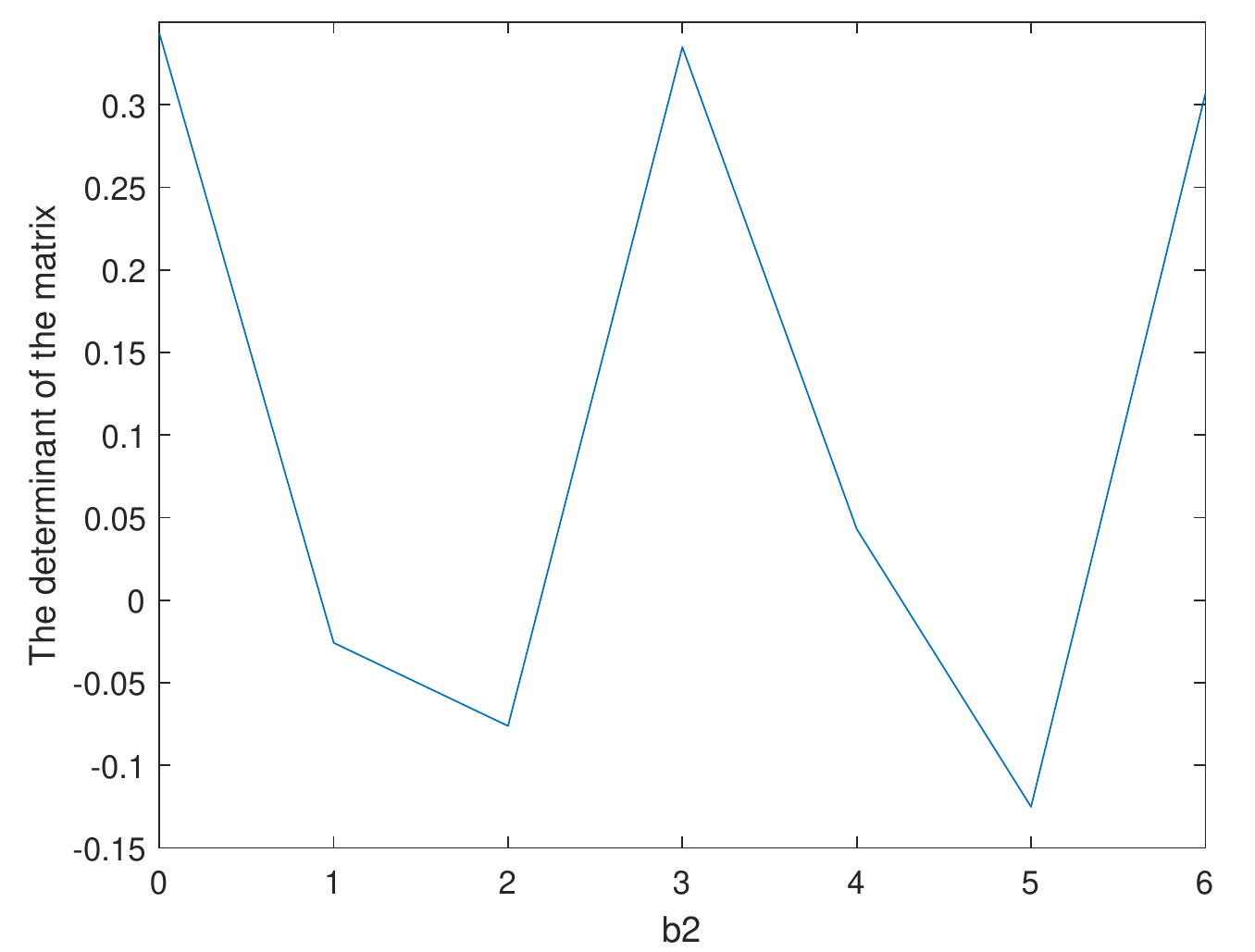}}
		\subfigure[]{
			\label{fig.b}
			\includegraphics[width=8.5cm]{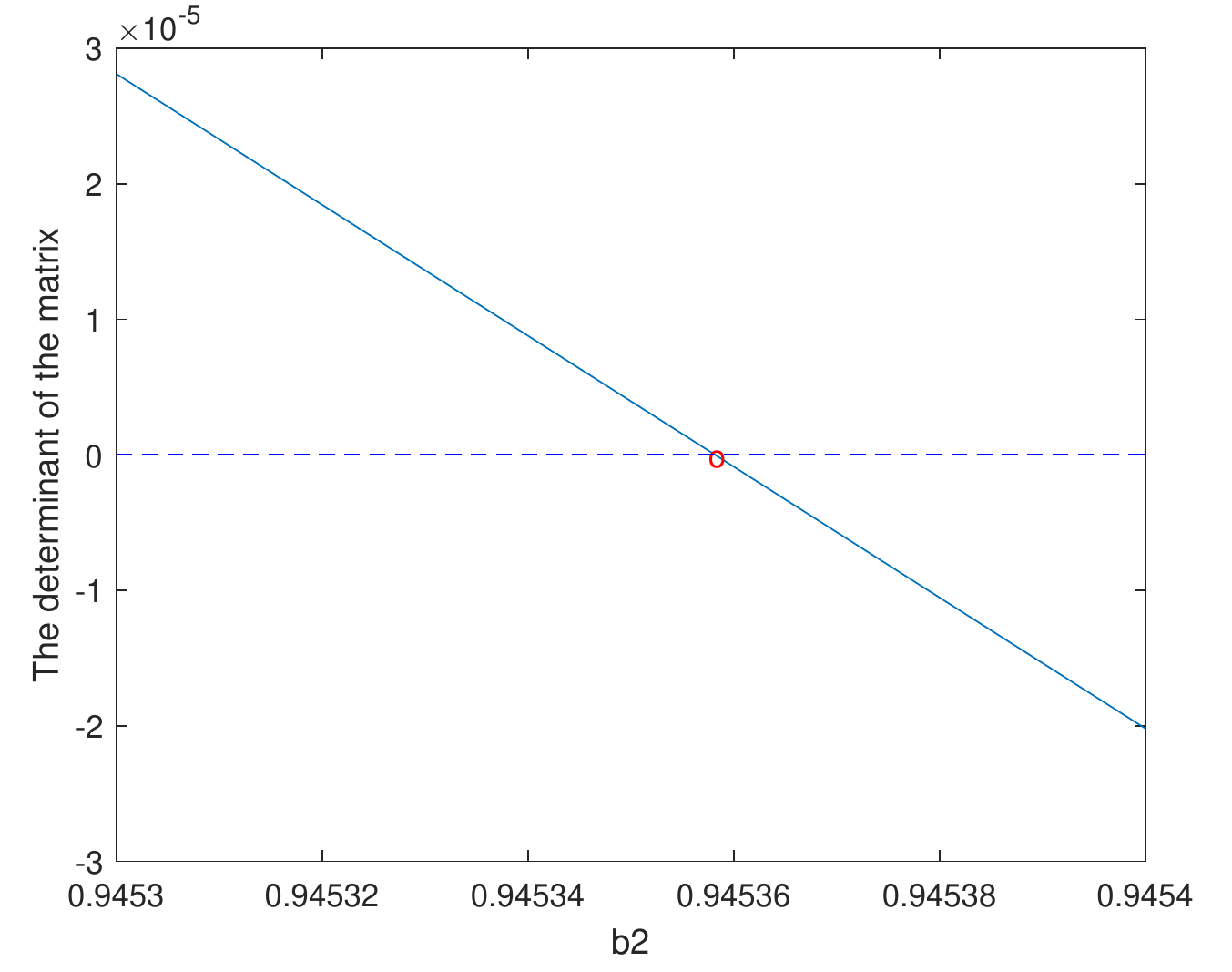}}
		\caption{Figure (a) shows the value of function in \eqref{eq:fun} and (b) shows one point of $b_2$ such that the function is zero.} \label{F5} 
	\end{figure}
	From Fig. \ref{fig}, we obtain that there exist four $b_2$ such that 
	the determinant in \eqref{eq:fun} is zero. We choose the first $b_2\in[0.9453,0.9454]$ as the solution in Fig. \ref{fig.b}. Then we obtain that there exists a $\ket{\varphi_6}$ such that it is orthogonal to four rows $1,3,5,11$. When $\ket{\varphi_1,\cdots,\varphi_5}=\ket{a_{2,3}',a_{3,4},a_{3,5},a_{8,6},a_{5,7}}$, it is orthogonal to the remaining seven rows. So we find a $\ket{\varphi_1,\cdots,\varphi_6}$ such that it is orthogonal to all row of $A$. So the $7$-qubit UPB of size $11$ is not a UPB in this case with merging systems $A_1,A_2$. However, when the determinant is not zero with some $b_2$, we obtain that it is a UPB in this case. So the results depend on $a_1,a_2,a_3,b_1,b_2,b_3$, which is uncertain. So in cases (a), (b), (c), we only consider the points in whole range $[0,2\pi)$ to make the determinant zero.
	
	(a) When $\ket{\varphi_1,\cdots,\varphi_5}$ is orthogonal to exactly seven rows, it implies that $\ket{\varphi_6}$ is orthogonal to four rows. Since $\ket{\varphi_6}\in\mathbb{C}^4$, the four rows have at most three linearly independent vectors. It implies that the determinant of  matrix constructed by the four vectors is zero. 
	
	First, we consider the cases such that $\ket{\varphi_1,\cdots,\varphi_5}$ is orthogonal to exactly seven rows. Up to systems permutation, there exist two cases about the multiplicity of entries: 
	\begin{eqnarray}
		\label{eq:31}
		3+1+1+1+1=7\quad \text{or} \quad2+2+1+1+1=7.
	\end{eqnarray}
	The entries with multiplicity two in column $4-7$ are in row $6,8$ or $6,10$. In column $3$, the entry with multiplicity three is in row $1,3,6$ and the entries with multiplicity two are in row $2,4$ or $7,8$ or $9,10$, respectively.  Suppose that the seven rows is $(x_1,x_2,x_3,x_4,x_5,x_6,x_7)$, where any two $x_i,x_j$ are distinct. For the first case in \eqref{eq:31}, since there exists exactly one entry with multiplicity three in row $1,3,6$, we have the seven rows is $(1,3,6,x_4,x_5,x_6,x_7)$. For the second case in \eqref{eq:31}, the seven rows is $(1,3,6,8,x_5,x_6,x_7), (1,3,6,10,x_5,x_6,x_7),(2,4,6,8,x_5,x_6,x_7),(2,4,6,10,x_5,x_6,x_7)$, $(7,8,6,10,x_5,x_6,x_7)$ and $(9,10,6,8,x_5,x_6,x_7)$. If there exists a product vector $\ket{\varphi_1,\cdots,\varphi_6}$ such that it is orthogonal to $A$, then $\ket{\varphi_6}$ is orthogonal to four rows without the known rows in above two cases in \eqref{eq:31}. In \eqref{eq:labelab}, if we select four arbitrary rows, then we have $330$ matrices constructed by the four rows. Specially, when the matrix has determinant zero, by using Matlab, there exist $44$ cases of four rows such that the matrices constructed by them have determinant zero. To be more intuitive, we list these cases of four rows in Table \ref{tab:1}.
	\begin{table}[htp]
		\centering
		\fontsize{8}{10}\selectfont
		\caption{The $44$ cases of four rows constructing matrices with determinant zero.}
		\begin{tabular}{cccccccc}
			\hline
			\hline
			(1,2,5,6)& (1,2,7,10)&(1,2,8,9)&(1,3,5,6)&(1,4,5,6)&(1,4,7,8)&(1,4,9,10)&(1,5,6,7)\\
			\hline
			(1,5,6,8)&(1,5,6,9)&(1,5,6,10)&(1,5,6,11)&(2,3,5,6)&(2,3,7,8)&(2,3,9,10)&(2,4,5,6)\\
			\hline
			(2,5,6,7)&(2,5,6,8)&(2,5,6,9)&(2,5,6,10)&(2,5,6,11)&(3,4,5,6)&(3,4,7,10)&(3,4,8,9)\\
			\hline
			(3,5,6,7)&(3,5,6,8)&(3,5,6,9)&(3,5,6,10)&(3,5,6,11)&(4,5,6,7)&(4,5,6,8)&(4,5,6,9)\\
			\hline
			(4,5,6,10)&(4,5,6,11)&(5,6,7,8)&(5,6,7,9)&(5,6,7,10)&(5,6,7,11)&(5,6,8,9)&(5,6,8,10)\\
			\hline
			(5,6,8,11)&(5,6,9,10)&(5,6,9,11)&(5,6,10,11)&-&-&-&-\\
			\hline
			\hline
		\end{tabular}		
		\label{tab:1}
	\end{table} 
Through analysis and observation, we obtain that these matrices have two forms by permutating systems $A_1,A_2$:
	\begin{eqnarray}
		\bma
		a&*\\
		a&*\\
		*&b\\
		*&b
		\ema,
		\quad
		\bma
		a&b\\
		a&b\\
		*&*\\
		*&*
		\ema.
	\end{eqnarray}

One can verify that there does not exist the four rows without the known rows in \eqref{eq:31}. So we obtain that when $\ket{\varphi_1\cdots,\varphi_5}$ is orthogonal to seven rows, there does not exist remaining four rows such that they are orthogonal to $\ket{\varphi_6}$. 

(b) When $\ket{\varphi_1,\cdots,\varphi_5}$ is orthogonal to exactly six rows, it implies that $\ket{\varphi_6}$ is orthogonal to five rows. 
Since $\ket{\varphi_6}\in\mathbb{C}^4$, if it is orthogonal to five rows, then any $4\times4$ matrix constructed with four rows of the five rows has determinant zero. From Table \ref{tab:1}, one can verify that there does not exist five rows such that the $4\times4$ matrices constructed by its any four rows  have determinant zero.  So we obtain that this case does not hold.

(c) When $\ket{\varphi_1,\cdots,\varphi_5}$ is orthogonal to exactly five rows, it implies that $\ket{\varphi_6}$ is orthogonal to the remaining six rows in \eqref{eq:labelab}. Since $\ket{\varphi_6}\in\mathbb{C}^4$, by using the same method in (b), we obtain that any four rows of the six rows construct matrices with determinant zero. If we regard the six rows as a $6\times4$ matrix, then it implies that it has five $5\times4$ submatrices such that any four rows of them construct $4\times4$ matrices with determinant zero. It is a contradiction with the results in (b) that there does not exist five rows such that its any four rows has determinant zero.  So this case does not hold. 

From the above cases (a), (b) and (c), we obtain that when we merge systems $A_1$ and $A_2$, the $7$-qubit UPB of size $11$ is also a $6$-partite UPB.
\end{proof}

Next, we further consider a case by merging another two systems. 

\begin{lemma}
	\label{le:3}
	When systems are merged $A_2$ and $A_3$, the $7$-qubit UPB of size $11$ is a $6$-partite UPB in this case.
\end{lemma}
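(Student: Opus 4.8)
The plan is to follow the proof of Lemma~\ref{le:2} almost verbatim, permuting the columns of $A$ in \eqref{uom:1} so that the merged $\mathbb{C}^4$ factor now carries columns $2,3$ (systems $A_2,A_3$) and the five remaining qubit factors carry columns $1,4,5,6,7$ (systems $A_1,A_4,A_5,A_6,A_7$). Assume for contradiction that after merging $A_2,A_3$ there is a product vector $\ket{\varphi_1,\dots,\varphi_6}$ orthogonal to all $11$ rows, with $\ket{\varphi_6}\in\mathbb{C}^4$ supported on $A_2A_3$. First I would write out the $11\times2$ submatrix of $A$ on columns $2,3$ and parametrize its nine distinct qubit directions on $A_2$ and on $A_3$ by sines and cosines exactly as in \eqref{eq:12sincos}.

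Next I would redo the counting bound. Columns $4$--$7$ of $A$ are untouched, so the argument of Lemma~\ref{le:2} applies word for word: the only multiplicity-two entries in these columns are $\ket{a_{3,4}'},\ket{a_{6,5}},\ket{a_{2,6}'},\ket{a_{5,7}'}$, all of which occur in row~$6$, so $\ket{\varphi_2,\varphi_3,\varphi_4,\varphi_5}$ on $A_4\cdots A_7$ is orthogonal to at most five rows; every entry of column~$1$ has multiplicity at most two, so $\ket{\varphi_1}$ is orthogonal to at most two rows; hence $\ket{\varphi_1,\dots,\varphi_5}$ is orthogonal to at most seven rows and $\ket{\varphi_6}$ to at least four, yielding the three cases (a), (b), (c) in which $\ket{\varphi_1,\dots,\varphi_5}$ is orthogonal to exactly seven, six, five rows respectively. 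I would then build the analogue of Table~\ref{tab:1}: four rows of this $11\times2$ submatrix give four product vectors in $\mathbb{C}^4$ that are linearly dependent for all parameter values precisely when the four rows contain a pair with equal $A_2$-entry together with a disjoint pair with equal $A_3$-entry, or when three of the four rows carry the common entry $\ket{a_{1,3}}$ (which lies in rows $1,3,6$); a short enumeration (or a Matlab check as in Lemma~\ref{le:2}) lists all such $4$-row subsets.

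For case (a), $\ket{\varphi_6}$ must kill the four rows complementary to the seven-row set $T$ killed by $\ket{\varphi_1,\dots,\varphi_5}$, so this $4$-row set must appear in the table, while $T$ must be a union of at most five column-blocks --- for each of the columns $1,4,5,6,7$, a set of rows sharing a common entry --- each of size at most two. Covering seven rows with five such blocks forces at least two double blocks inside $T$, but the admissible double blocks are scarce and overlapping (the five double blocks in column~$1$ are mutually exclusive, columns $4,5$ share only $\{6,8\}$, columns $6,7$ share only $\{6,10\}$) and every $4$-row subset in the table meets $\{6,8,10\}$; a finite check then shows the complementary seven rows can never be so covered, so (a) is impossible. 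For case (b), $\ket{\varphi_6}$ kills five rows all of whose $\binom{5}{4}$ four-element subsets lie in the table; unlike in Lemma~\ref{le:2}, such five-row sets do exist here, namely $\{1,3,6,7,10\}$ and $\{1,3,6,8,9\}$, a consequence of the multiplicity-three entry $\ket{a_{1,3}}$ --- this is the one genuinely new phenomenon --- but in both cases the complementary six rows contain no admissible double block, hence cannot be covered by five column-blocks, a contradiction. Case (c) would require a six-row set all of whose four-subsets lie in the table, hence one containing $\{1,3,6,7,10\}$ or $\{1,3,6,8,9\}$; a direct check shows no such six-row set exists. All three cases being excluded, no orthogonal product vector exists and the merged set is a $6$-partite UPB.

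The main obstacle is the extra bookkeeping in cases (b) and (c). In Lemma~\ref{le:2} the column-$2$ entries had multiplicity at most two, so no five-row set had all four-subsets rank-deficient; here the multiplicity-three entry $\ket{a_{1,3}}$ sits inside the merged $\mathbb{C}^4$ block and produces both the extra ``triple'' table entries and the two genuine exceptional five-row sets, so ruling out (b) needs the additional observation that on the qubit side their complements are six-row sets admitting no double block and hence not coverable by five column-blocks of size $\le 2$. Keeping the rank-deficiency enumeration and the column-block covering analysis consistent is where the care lies.
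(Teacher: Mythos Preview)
Your proposal is correct and follows essentially the same route as the paper: the same $11\times2$ submatrix on $A_2A_3$, the same counting bound yielding cases (a), (b), (c), the same enumeration of always-rank-deficient $4$-row subsets (the paper finds $23$ of them and records the two structural forms you describe), and the same two exceptional five-row sets $\{1,3,6,7,10\}$, $\{1,3,6,8,9\}$ in case (b). Your dispatch of (b) via ``the complementary six rows contain no admissible double block'' is a slightly cleaner rephrasing of the paper's argument, which instead splits on whether the required multiplicity-two entry lies in column $1$ or in columns $4$--$7$ and derives a contradiction in each subcase; the content is identical.
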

\begin{proof}
We may assume that the form of product vectors corresponding to $A$ in \eqref{uom:1} is 
\begin{eqnarray}
	\label{eq:sixhcgd}
	\ket{g_{j,1},\cdots,g_{j,6}}\in\mathcal{H}_{A_1}\otimes\cdots\otimes\mathcal{H}_{A_7}\otimes\mathcal{H}_{A_2 A_3},\quad j=1,\cdots,11.
\end{eqnarray}	We show an $11\times2$ submatrix with systems $A_2,A_3$ of $A$ in \eqref{uom:1} as
\begin{eqnarray}
	\label{eq:a2a3}
	\bma
	a_{1,2}&a_{1,3}\\
a_{2,2}&a_{2,3}\\	
a_{2,2}&a_{1,3}\\
	a_{1,2}&a_{2,3}\\
a_{1,2}'&a_{5,3}\\
a_{1,2}'&a_{1,3}\\
	a_{7,2}&a_{2,3}'\\
	a_{7,2}'&a_{2,3}'\\
	a_{7,2}'&a_{1,3}'\\
	a_{7,2}&a_{1,3}'\\
	a_{2,2}'&a_{5,3}'
	\ema.
\end{eqnarray}
Suppose that there exists a product vector $\ket{\varphi_1,\cdots,\varphi_6}$, which is orthogonal to $\ket{g_{j,1},\cdots,g_{j,6}}$ in \eqref{eq:sixhcgd}. From Lemma \ref{le:2}, we obtain that on systems $A_4,\cdots,A_7$, the entries with multiplicity two are in row $6,8$ or $6,10$ and the $\ket{\varphi_2,\varphi_3,\varphi_4,\varphi_5}$ is orthogonal to
at most five rows. Since the multiplicity of entries on system $A_1$ are at most two, it implies that $\ket{\varphi_1,\cdots,\varphi_5}$ is orthogonal to at most seven rows in \eqref{eq:a2a3}. Then 
there exist three cases: (a) $\ket{\varphi_1,\cdots,\varphi_5}$ is orthogonal to exactly seven rows, (b) $\ket{\varphi_1,\cdots,\varphi_5}$ is orthogonal to exactly six rows and (c) $\ket{\varphi_1,\cdots,\varphi_5}$ is orthogonal to exactly five rows.

(a) When $\ket{\varphi_1,\cdots,\varphi_5}$ is orthogonal to exactly seven rows, it implies that $\ket{\varphi_6}$ is orthogonal to four rows.  Since $\ket{\varphi_6}\in\mathbb{C}^4$, the four rows has at most three linearly independent vectors. It implies that the determinant of  matrix is zero, which is constructed by the four vectors corresponding to the four rows. By using Matlab, there exist $23$ cases. Then we list these cases of four rows in Table \ref{tab:2}. 
\begin{table}[htp]
	\centering
	\fontsize{8}{10}\selectfont
	\caption{The $23$ cases of four rows constructing matrices with determinant zero.}
\begin{tabular}{cccccc}
	\hline
	\hline
	(1,2,3,6)& (1,3,4,6)&(1,3,5,6)&(1,3,6,7)&(1,3,6,8)&(1,3,6,9)\\
	\hline
(1,3,6,10)&(1,3,6,11)&(1,3,7,10)&(1,3,8,9)&(1,4,7,8)&(1,4,9,10)\\
\hline
(1,6,7,10)&(1,6,8,9)&(2,3,7,8)&(2,3,9,10)&(2,4,5,6)&(2,4,7,10)\\
\hline
(2,4,8,9)&(3,6,7,10)&(3,6,8,9)&(5,6,7,8)&(5,6,9,10)&-\\
	\hline
	\hline
\end{tabular}		
\label{tab:2}
\end{table}
These matrices satisfy the following  forms up to systems permutation
\begin{eqnarray}
	\bma
	a&*\\
	a&*\\
	*&b\\
	*&b
	\ema,\quad
	\bma
	a&*\\
	a&*\\
	a&*\\
	*&*
	\ema.
\end{eqnarray}
The entries in column $4-7$ with multiplicity two are in row $6,8$ or $6,10$. So the seven rows include row $6$. Then $\ket{\varphi_6}$ is orthogonal to the four rows in Table \ref{tab:2} without row $6$. There exist eight cases
\begin{eqnarray}
	\label{eq:fourzero}
	(1,3,7,10),(1,3,8,9),(1,4,7,8),(1,4,9,10),\nonumber\\
	(2,3,7,8),(2,3,9,10),(2,4,7,10),(2,4,8,9).
\end{eqnarray}
 Since $\ket{\varphi_1,\cdots,\varphi_5}$ is orthogonal to exactly seven rows, we obtain that the entries with multiplicity two on system $A_1,A_4 (A_5/A_6/A_7)$ are in four different rows. Because the entries with multiplicity two are in row $6,8$ or $6,10$, from the entries with multiplicity two on system $A_1$, the cases of four different rows are  
 \begin{eqnarray}
 	\label{eq:fourtwo}
 (1,2,6,8),(1,2,6,10),(3,4,6,8),(3,4,6,10),(7,8,6,10),(9,10,6,8).
 \end{eqnarray}
 One can verify that there does not exist the fact that any one of cases in \eqref{eq:fourzero} is in different rows with the cases in \eqref{eq:fourtwo}. So if the $\ket{\varphi_1,\cdots,\varphi_5}$ is orthogonal to seven rows, then there does not exist a $\ket{\varphi_6}\in\mathbb{C}^4$ such that it is orthogonal to the remaining four rows. 
 So (a) does not hold.

(b) When $\ket{\varphi_1,\cdots,\varphi_5}$ is orthogonal to exactly six rows, it implies that $\ket{\varphi_6}$ is orthogonal to five rows.  Since $\ket{\varphi_6}\in\mathbb{C}^4$, any four of the five rows have at most three linearly independent vectors. It implies that the determinant of  matrices is zero, which is constructed by the four vectors corresponding to the four rows. According to Table \ref{tab:2} in (a), there exist two cases $(1,3,6,7,10), (1,3,6,8,9)$ such that any four rows of the five rows corresponds to matrices with determinant zero. Since $\ket{\varphi_1,\cdots,\varphi_5}$ is orthogonal to exactly six rows, it implies that one of $\ket{\varphi_1},\cdots,\ket{\varphi_5}$ is orthogonal to an entry with multiplicity two. Since the entries on system $A_4,\cdots,A_7$ with multiplicity two are in row $6,8$ or $6,10$, if one of  $\ket{\varphi_2},\ket{\varphi_3},\ket{\varphi_4},\ket{\varphi_5}$ is orthogonal to the entry with multiplicity two, then it is orthogonal to row $6$. It shows that $\ket{\varphi_6}$ is not orthogonal to row $6$. However,  $\ket{\varphi_6}$ is orthogonal to five rows $(1,3,6,7,10)$ or $(1,3,6,8,9)$, we obtain that it does not hold. So it implies that $\ket{\varphi_1}$ on system $A_1$ is orthogonal to an entry with multiplicity two. From the first column of $A$ in \eqref{uom:1}, it shows that the entries with multiplicity two is in row $(1,2),(3,4),(5,6),(7,8)$ and $(9,10)$. Since $\ket{\varphi_6}$ is orthogonal to five rows, which are different with that of $\ket{\varphi_1}$, we obtain that there does not exist this case. So (b) does not hold.

(c) When $\ket{\varphi_1,\cdots,\varphi_5}$ is orthogonal to exactly five rows, it implies that $\ket{\varphi_6}$ is orthogonal to six rows. By using the same method in (a) and (b), we obtain that any matrices constructed by four rows of the six rows have determinant zero. For six rows, there exist six different five rows such that any four rows of these five rows have determinant zero. It is a contradiction with the fact that there exist exactly two cases $(1,3,6,7,10),(1,3,6,8,9)$ in (b). So (c) does not hold.

From above cases (a), (b) and (c),  there does not exist a product vector $\ket{\varphi_1,\cdots,\varphi_6}$ such that it is orthogonal to the $7$-qubit UPB of size $11$ in \eqref{uom:1} by merging systems $A_2$ and $A_3$. So this Lemma holds.
\end{proof}

Based on Lemma \ref{le:2} and \ref{le:3}, we have shown that the existence of $6$-partite UPBs by merging two systems $A_1,A_2$ and $A_2,A_3$, respectively. We further investigate the remaining merged ways in Table \ref{tab:perfor} and make a summary about the cases with any two  merged systems. We present the following observation. 
\begin{theorem}
	\label{thm:main}
 One of the cases by merging systems $A_1,A_2$ and $A_2,A_3$ is a $6$-partite UPB. The remaining cases by merging any two systems  are not $6$-partite UPBs. 
\end{theorem}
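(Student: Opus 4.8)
The plan is to combine Lemmas \ref{le:2} and \ref{le:3} with an explicit construction that disposes of the remaining nineteen merged pairs in one sweep. Lemmas \ref{le:2} and \ref{le:3} already give that merging $A_1A_2$ or $A_2A_3$ yields a $6$-partite UPB, so what is left is to show that for every other pair $(A_i,A_j)$ in Table \ref{tab:perfor} the merged configuration is \emph{not} a UPB, i.e.\ that there is a $6$-partite product vector $\ket{\varphi_1,\ldots,\varphi_6}$ orthogonal to all eleven rows of the reordered UOM, with the factor $\ket{\varphi_6}\in\mathbb{C}^4$ living on the merged system.

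For this I would reuse the bookkeeping from the proof of Lemma \ref{le:2}. A row of the UOM is orthogonal to $\ket{\varphi_1,\ldots,\varphi_6}$ exactly when one of its entries is orthogonal to the corresponding factor, so fixing $\ket{\varphi_k}$ on a non-merged qubit $A_k$ kills precisely the rows whose $A_k$-entry is the qubit direction orthogonal to $\ket{\varphi_k}$; the five factors $\ket{\varphi_1},\ldots,\ket{\varphi_5}$ thus kill a union of five ``monochromatic'' row-classes. Using the multiplicity pattern of $A$ in \eqref{uom:1} — column $A_3$ carries an entry of multiplicity three, columns $A_1,A_2$ carry five entries of multiplicity two, and columns $A_4,\ldots,A_7$ carry a single entry of multiplicity two (rows $\{6,8\}$ for $A_4,A_5$, rows $\{6,10\}$ for $A_6,A_7$), exactly as used in Lemmas \ref{le:2} and \ref{le:3} — one picks these five directions so that together they kill at least seven of the eleven rows (typically via the disjoint patterns $3+1+1+1+1$ or $2+2+1+1+1$ of \eqref{eq:31}), leaving at most four rows. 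Those remaining rows must be killed by $\ket{\varphi_6}\in\mathbb{C}^4$, which is possible whenever the corresponding merged vectors $\ket{g_{j,\cdot}}$ span at most a $3$-dimensional subspace of $\mathbb{C}^4$; by the analysis in Lemma \ref{le:2} this is the case precisely when the associated $4\times2$ merged submatrix has one of the shapes
\[
\bma a&*\\ a&*\\ *&b\\ *&b\ema,\qquad \bma a&b\\ a&b\\ *&*\\ *&*\ema ,
\]
one then taking $\ket{\varphi_6}=\ket{a'}\otimes\ket{b'}$ in the first case and the unique normal to the $3$-dimensional span in the second; and if the five directions happen to kill eight or more rows, at most three rows remain and a valid $\ket{\varphi_6}$ exists automatically.

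Executing this, one runs through the nineteen remaining pairs: in the $11\times2$ submatrix on $A_i,A_j$ one locates two rows sharing an $A_i$-entry together with a disjoint pair sharing an $A_j$-entry (or a row repeated on both merged columns), checks that the complementary seven rows admit a monochromatic $5$-cover on the non-merged qubits, and records the resulting $\ket{\varphi_1,\ldots,\varphi_6}$; each verification is finite and reduces to reading the column labels of $A$. For instance, for the pair $A_1A_3$ take on $A_2,A_4,A_5,A_6,A_7$ the qubits $\ket{\varphi_1}=\ket{a_{1,2}'}$, $\ket{\varphi_2}=\ket{a_{3,4}}$, $\ket{\varphi_3}=\ket{a_{1,5}}$, $\ket{\varphi_4}=\ket{a_{3,6}'}$, $\ket{\varphi_5}=\ket{a_{5,7}'}$, which kill rows $1,2,3,4,5,6,8$; the remaining rows $7,9,10,11$ have merged $(A_1,A_3)$-entries $(a_{4,1}',a_{2,3}')$, $(a_{9,1},a_{1,3}')$, $(a_{9,1},a_{1,3}')$, $(a_{9,1}',a_{5,3}')$, i.e.\ only three distinct product vectors in $\mathbb{C}^4$, so a $\ket{\varphi_6}$ orthogonal to all of them exists and $A_1A_3$ is not a UPB. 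The nineteen witnesses would be collected in a single table listing, for each pair, the seven rows killed by $\ket{\varphi_1,\ldots,\varphi_5}$, the rows killed by $\ket{\varphi_6}$, and the degenerate shape used; a few pairs are moreover related by column-permutation symmetries of $A$ (acting within $\{A_1,A_2\}$, within $\{A_4,A_5\}$, within $\{A_6,A_7\}$), which reduces the number of cases one actually has to write out.

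I expect the main obstacle to be organizational rather than conceptual: because $A$ has small multiplicities, for each of the nineteen pairs an admissible row-splitting is short to find, but one must carry this out uniformly and make sure that \emph{every} pair has been covered and that the witness is genuinely orthogonal to all eleven rows. The contrast with the two exceptional pairs is instructive — there Lemmas \ref{le:2} and \ref{le:3} show that no admissible splitting exists at all (the required degenerate $4\times2$ shapes never align with a disjoint seven-row monochromatic cover), which is exactly why $A_1A_2$ and $A_2A_3$ remain UPBs. Combining the nineteen explicit witnesses with Lemmas \ref{le:2} and \ref{le:3} then yields the theorem.
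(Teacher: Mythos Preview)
Your plan is correct and close in spirit to the paper's: both construct, for each of the nineteen non-exceptional pairs, an explicit $6$-partite product vector orthogonal to all eleven rows by first killing enough rows on the five unmerged qubits and then letting the $\mathbb{C}^4$ factor absorb the remainder, and both invoke Lemmas~\ref{le:2} and~\ref{le:3} for the two surviving UPBs. Your worked example for $A_1A_3$ is valid. The paper organizes the nineteen cases differently, splitting on whether the merged pair contains $A_3$. When it does not (fourteen cases), column $A_3$ is still an unmerged qubit and carries the unique multiplicity-three entry $\ket{a_{1,3}}$ in rows $1,3,6$; combining this with a multiplicity-two entry from whichever of $A_1,A_2$ remains unmerged (chosen in rows disjoint from $\{1,3,6\}$) and three further single-row kills always disposes of eight rows, so only three remain and the $\mathbb{C}^4$ factor handles them automatically with no degenerate-shape check needed. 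When the merged pair does contain $A_3$, the paper gives ad hoc witnesses for $A_1A_3$ and $A_3A_i$ ($i=4,5,6,7$) much like your $A_1A_3$ example. Your uniform $7{+}4$ or $8{+}3$ framework works as well, but the paper's observation that all fourteen non-$A_3$ cases fall at once to the $3{+}2{+}1{+}1{+}1=8$ pattern is a worthwhile shortcut you might adopt. One caution: the column-permutation symmetries you invoke within $\{A_1,A_2\}$, $\{A_4,A_5\}$, $\{A_6,A_7\}$ are not evident symmetries of the full UOM $A$ in \eqref{uom:1}, so do not rely on them to shrink the case list; you will still need to record a witness for each pair.
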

\begin{proof}
We prove the claim in two cases (i) the two merged systems are  without system $A_3$, (ii) the merged two systems includes system $A_3$. 

(i) When the two merged systems are without system $A_3$, there exist two cases (i.A) and (i.B) as the following.

 (i.A) The two merged systems are without system $A_1$ or $A_2$. Then we list the possible cases 
 \begin{eqnarray}
 	\label{eq:14case}
 A_1A_4,A_1A_5,A_1A_6,A_1A_7,A_2A_4,A_2A_5,A_2A_6,\nonumber\\
 A_2A_7,A_4A_5,A_4A_6,A_4A_7,A_5A_6,A_5A_7,A_6A_7.
 \end{eqnarray}
In \eqref{eq:14case},  there exist $14$ cases. We denote cases in  \eqref{eq:14case} as $\mathcal{S}_i,i=1,\cdots,14$. For cases $\mathcal{S}_i,i=1,\cdots,4,9,\cdots,14$, we may assume that the form of product vectors is 
\begin{eqnarray}
	\label{eq:sixhc}
	\ket{g^i_{j,1},\cdots,g^i_{j,6}}\in\mathcal{H}_{A_2}\otimes\cdots\otimes\mathcal{H}_{A_3}\otimes\mathcal{H}_{\mathcal{S}_i}\subseteq\mathbb{C}^2\otimes\mathbb{C}^2\otimes\mathbb{C}^2\otimes\mathbb{C}^2\otimes\mathbb{C}^2\otimes\mathbb{C}^4,
\end{eqnarray}  
where $j$ presents $j$-th row of matrix up to system permutation  of $A$. In \eqref{uom:1}, it shows that entries of the second and third columns have multiplicities at most two and three, respectively. If $A$ is not a UPB in above cases then there exists a product vector $\ket{\varphi^i_1,\cdots,\varphi^i_6}$ in space $\mathbb{C}^2\otimes\mathbb{C}^2\otimes\mathbb{C}^2\otimes\mathbb{C}^2\otimes\mathbb{C}^2\otimes\mathbb{C}^4$ such that it is orthogonal to $\ket{g^i_{j,1},\cdots,g^i_{j,6}}$. Let $\ket{g^i_{7,1}}=\ket{g^i_{10,1}}=\ket{a_{7,2}}$ and $\ket{g^i_{1,5}}=\ket{g^i_{3,5}}=\ket{g^i_{6,5}}=\ket{a_{1,3}}$ in \eqref{eq:sixhc}. When $\ket{\varphi^i_1}=\ket{a_{7,2}'},\ket{\varphi^i_5}=\ket{a_{1,3}'}$, since the multiplicities of $\ket{a_{7,2}},\ket{a_{1,3}}$ are  two and three and they are in different rows of $A$, we obtain that $\ket{\varphi^i_1,\varphi^i_5}$ is orthogonal to five rows. So we consider whether $\ket{\varphi^i_2,\varphi^i_3,\varphi^i_4,\varphi^i_6}\in\mathbb{C}^2\otimes\mathbb{C}^2\otimes\mathbb{C}^2\otimes\mathbb{C}^4$ is orthogonal to the remaining six rows on systems without $A_2,A_3$. We may assume that $\ket{\varphi^i_2},\ket{\varphi^i_3},\ket{\varphi^i_4}$ are orthogonal to $\ket{g^i_{2,2}},\ket{g^i_{4,3}},\ket{g^i_{5,4}}$,  respectively. Then we have $\ket{\varphi^i_2}=\ket{{g^{i'}_{2,2}}},\ket{\varphi^i_3}=\ket{g^{i'}_{4,3}},\ket{\varphi^i_4}=\ket{g^{i'}_{5,4}}$. Since $\ket{g^i_{2,2}},\ket{g^i_{4,3}},\ket{g^i_{5,4}}$ are in three different rows, we obtain that $\ket{\varphi^i_6}$ should be orthogonal to the remaining three rows $8,9,11$. Because $\ket{g^i_{8,6}},\ket{g^i_{9,6}},\ket{g^i_{11,6}}\in\mathbb{C}^4$, there exists a $\ket{\varphi^i_6}$ such that it is orthogonal to $\ket{g^i_{8,6}},\ket{g^i_{9,6}},\ket{g^i_{11,6}}$ on system ${\mathcal{S}_i},i=1,\cdots,4,9,\cdots,14$. Similarly, for $\mathcal{S}_i,i=5,6,7,8$, we may assume that the form of product vectors is 
\begin{eqnarray}
	\label{eq:sixhcf}
	\ket{g^i_{j,1},\cdots,g^i_{j,6}}\in\mathcal{H}_{A_1}\otimes\cdots\otimes\mathcal{H}_{A_3}\otimes\mathcal{H}_{\mathcal{S}_i}\subseteq\mathbb{C}^2\otimes\mathbb{C}^2\otimes\mathbb{C}^2\otimes\mathbb{C}^2\otimes\mathbb{C}^2\otimes\mathbb{C}^4.
\end{eqnarray}  
Let $\ket{g^i_{7,1}}=\ket{g^i_{8,1}}=\ket{a_{4,1}'}$ and $\ket{g^i_{1,5}}=\ket{g^i_{3,5}}=\ket{g^i_{6,5}}=\ket{a_{1,3}}$ in \eqref{eq:sixhc}. When $\ket{\varphi^i_1}=\ket{a_{4,1}},\ket{\varphi^i_5}=\ket{a_{1,3}'}$, since the multiplicities of $\ket{a_{7,2}},\ket{a_{1,3}}$ are  two and three and they are in different rows in $A$, we obtain that $\ket{\varphi^i_1,\varphi^i_5}$ is orthogonal to five rows. So we consider whether $\ket{\varphi^i_2,\varphi^i_3,\varphi^i_4,\varphi^i_6}\in\mathbb{C}^2\otimes\mathbb{C}^2\otimes\mathbb{C}^2\otimes\mathbb{C}^4$ is orthogonal to the remaining six rows on systems without $A_1,A_3$. We may assume that $\ket{\varphi^i_2},\ket{\varphi^i_3},\ket{\varphi^i_4}$ are orthogonal to $\ket{g^i_{2,2}},\ket{g^i_{4,3}},\ket{g^i_{5,4}}$,  respectively. Since $\ket{g^i_{2,2}},\ket{g^i_{4,3}},\ket{g^i_{5,4}}$ are in three different rows, we obtain that $\ket{\varphi^i_6}$ should be orthogonal to the remaining three rows $9,10,11$. Because $\ket{g^i_{9,6}},\ket{g^i_{10,6}},\ket{g^i_{11,6}}\in\mathbb{C}^4$, there exists a $\ket{\varphi^i_6}$ such that it is orthogonal to $\ket{g^i_{9,6}},\ket{g^i_{10,6}},\ket{g^i_{11,6}}$ on system ${\mathcal{S}_i},i=5,6,7,8$.

Hence, the $7$-qubit UPB of size $11$ is not a UPB in the above $14$ cases in \eqref{eq:14case}.
 
(i.B) When the  two merged systems are $A_1$ and $A_2$, we have proved the results in this case in Lemma \ref{le:2}.
 
From cases (i.A) and (i.B),  the $7$-qubit UPB of size $11$ is not a $6$-partite UPB in case (i.A) and is a UPB in case (i.B).

(ii) When the two merged systems includes system $A_3$, by permutating the systems and rows of $A$ in \eqref{uom:1}, we have
\begin{eqnarray}
	\label{matrix:3}
\tilde{A}=\bma
a_{1,3}&a_{1,1}&a_{1,2}&a_{1,4}&a_{1,5}&a_{1,6}&a_{1,7}\\
a_{1,3}&a_{1,1}'&a_{2,2}&a_{3,4}&a_{3,5}&a_{3,6}&a_{3,7}\\
a_{1,3}&a_{4,1}&a_{1,2}'&a_{3,4}'&a_{6,5}&a_{2,6}'&a_{5,7}'\\
a_{1,3}'&a_{9,1}&a_{7,2}'&a_{4,4}'&a_{5,5}'&a_{8,6}'&a_{2,7}'\\
a_{1,3}'&a_{9,1}&a_{7,2}&a_{7,4}'&a_{4,5}'&a_{2,6}'&a_{5,7}'\\
a_{2,3}&a_{1,1}'&a_{1,2}&a_{4,4}&a_{4,5}&a_{4,6}&a_{3,7}'\\
a_{2,3}&a_{1,1}&a_{2,2}&a_{2,4}&a_{1,5}'&a_{2,6}&a_{2,7}\\
a_{2,3}'&a_{4,1}'&a_{7,2}&a_{7,4}&a_{3,5}'&a_{1,6}'&a_{7,7}\\
a_{2,3}'&a_{4,1}'&a_{7,2}'&a_{3,4}'&a_{6,5}&a_{8,6}&a_{1,7}'\\
a_{5,3}&a_{4,1}&a_{1,2}'&a_{2,4}'&a_{5,5}&a_{3,6}'&a_{5,7}\\
a_{5,3}'&a_{9,1}'&a_{2,2}'&a_{1,4}'&a_{6,5}'&a_{4,6}'&a_{7,7}'
\ema.
\end{eqnarray}
Suppose that there exists a product vector $\ket{\phi_1,\cdots,\phi_6}$ on systems $A_l,\cdots,A_k,A_3A_i$ such that it is orthogonal to all rows of $\tilde{A}$ in \eqref{matrix:3}, where $i\in\{1,\cdots,7\}, l,k\in\{1,\cdots,7\}\backslash\{3,i\}$ and $l< k$. 

If we merge the systems $A_3$ and $A_1$, then we obtain that $\ket{a_{1,3}',a_{9,1}}$ and $\ket{a_{2,3}',a_{4,1}'}$ have  multiplicities two in $\tilde{A}$, respectively. Since $\ket{\phi_6}\in\mathbb{C}^4$,  $\ket{\phi_6}$ is orthogonal to five rows of $\tilde{A}$ on system $A_3A_1$. If $\ket{\phi_6}$ is orthogonal to five rows of $\tilde{A}$ with  $\ket{a_{1,3}',a_{9,1}}$ on row $4,5$ and $\ket{a_{2,3}',a_{4,1}'}$ on row $8,9$, then $\ket{\phi_1,\cdots,\phi_5}$ is orthogonal to remaining six rows of a $7\times5$ matrix
$\bma
a_{1,2}&a_{1,4}&a_{1,5}&a_{1,6}&a_{1,7}\\
a_{2,2}&a_{3,4}&a_{3,5}&a_{3,6}&a_{3,7}\\
a_{1,2}'&a_{3,4}'&a_{6,5}&a_{2,6}'&a_{5,7}'\\
a_{1,2}&a_{4,4}&a_{4,5}&a_{4,6}&a_{3,7}'\\
a_{2,2}&a_{2,4}&a_{1,5}'&a_{2,6}&a_{2,7}\\
a_{1,2}'&a_{2,4}'&a_{5,5}&a_{3,6}'&a_{5,7}\\
a_{2,2}'&a_{1,4}'&a_{6,5}'&a_{4,6}'&a_{7,7}'
\ema$. In this $7\times5$ matrix, it shows that the entries with multiplicity two are $\ket{a_{1,2}}, \ket{a_{2,2}},\ket{a_{1,2}'}$. Since they are in the same column and the remaining entries have multiplicity one, we obtain that $\ket{\phi_1,\cdots,\phi_5}$ is orthogonal to at most six rows in the $7\times5$ matrix. So we find  a product vector $\ket{a_{1,2}',a_{3,4}',a_{6,5}',a_{2,6}',a_{5,7}', \phi_6}$ such that it is orthogonal to all rows of $\tilde{A}$ in \eqref{matrix:3}, where $\ket{\phi_6}\in\mathbb{C}^4$ is orthogonal to $\ket{a_{1,3}',a_{9,1}},\ket{a_{2,3}',a_{4,1}'}$ and $\ket{a_{5,3}',a_{9,1}'}$. 

Next, we consider the two merged systems are $A_3$ and other systems without $A_1$. Namely, the two merged systems are  $A_3,A_i,i\in\{2,4,5,6,7\}$. 
If $i=2$, we have proved  this case  in Lemma \ref{le:3}. 

If $i\in\{4,5,6,7\}$, since the entries with multiplicity two are in the same row, then $\ket{\phi_3,\phi_4,\phi_5}$ is orthogonal to at most four rows. Because the multiplicity of the entries on system $A_1,A_2$ is at most two,  $\ket{\phi_1,\phi_2}$ is orthogonal to at most four rows. So it shows that $\ket{\phi_1\cdots,\phi_5}$ is orthogonal to at most eight rows. 
If $\ket{\phi_1\cdots,\phi_5}$ is orthogonal to exactly eight rows, then $\ket{\phi_3,\phi_4,\phi_5}$ is orthogonal to row $3$ of $\tilde{A}$ in \eqref{matrix:3}. Since $\ket{\phi_1,\phi_2}$ is orthogonal to entries with multiplicity two, we obtain that  $\ket{\phi_1,\phi_2}$ is orthogonal to four rows of an $8\times2$ matrix
$
\bma
a_{1,1}&a_{1,2}\\
a_{1,1}'&a_{2,2}\\
a_{9,1}&a_{7,2}'\\
a_{9,1}&a_{7,2}\\
a_{1,1}'&a_{1,2}\\
a_{1,1}&a_{2,2}\\
a_{4,1}'&a_{7,2}\\
a_{4,1}'&a_{7,2}'\\
\ema
$ of $\tilde{A}$. By observing the $8\times2$ matrix, there exist the entries in column $1$ and $2$ with multiplicity two, which are in four different rows.  In $\tilde{A}$, the entries $\ket{a_{3,4}'}, \ket{a_{6,5}},\ket{a_{2,6}'},\ket{a_{5,7}'}$ on systems $A_4,\cdots,A_7$ with multiplicity two are in row $3, 9$ and $3,5$,  respectively. If $\ket{\phi_3}$ is orthogonal to row $3,9$, when $i=4$, then there exists a product vector $\ket{a_{1,1}',a_{7,2}',a_{6,5}',a_{3,6}',a_{2,7},\phi_6}$, which is orthogonal to all rows of $\tilde{A}$ in \eqref{matrix:3}, where $\ket{\phi_6}\in\mathbb{C}^4$ is orthogonal to $\ket{a_{2,3},a_{4,4}},\ket{a_{5,3},a_{2,4}'}$ and $\ket{a_{5,3}',a_{1,4}'}$. Similarly, when $i=5$, there exists a product vector $\ket{a_{1,1}',a_{7,2}',a_{3,4}, a_{3,6}',a_{2,7},\phi_6}$, which is orthogonal to all rows of $\tilde{A}$ in \eqref{matrix:3}, where $\ket{\phi_6}\in\mathbb{C}^4$ is orthogonal to $\ket{a_{2,3},a_{4,5}},\ket{a_{5,3},a_{5,5}}$ and $\ket{a_{5,3}',a_{6,5}'}$. When $i=6$, there exists a product vector $\ket{a_{1,1}',a_{7,2}',a_{3,4}, a_{3,5}',a_{2,7},\phi_6}$ is orthogonal to all rows of $\tilde{A}$ in \eqref{matrix:3}, where $\ket{\phi_6}\in\mathbb{C}^4$ is orthogonal to $\ket{a_{2,3},a_{4,6}},\ket{a_{5,3},a_{3,6}'}$ and $\ket{a_{5,3}',a_{4,6}'}$.  When $i=7$, there exists a product vector $\ket{a_{1,1}',a_{7,2}',a_{3,4}, a_{3,5}',a_{8,6},\phi_6}$ is orthogonal to all rows of $\tilde{A}$ in \eqref{matrix:3}, where $\ket{\phi_6}\in\mathbb{C}^4$ is orthogonal to $\ket{a_{2,3},a_{3,7}'},\ket{a_{5,3},a_{5,7}}$ and $\ket{a_{5,3}',a_{7,7}'}$. From above cases in (ii), we obtain that $7$-qubit UPB of size $11$ is a UPB in six partitions with  merged systems $A_2,A_3$ and is not a UPB in other merged ways.  

Besed on the above cases (i) and (ii), when the two merged systems are $A_1,A_2$ and $A_2,A_3$, the $7$-qubit UPB of size $11$ is a UPB in six partitions. Otherwise, it is not a $6$-partite UPB.
\end{proof}

Theorem \ref{thm:main} shows the existence of $6$-partite UPBs constructed by the $7$-qubit UPB of size $11$. By extending to merge more systems, we further discuss the existence of $n$-partite UPBs for $2\leq n\leq5$ in the following theorem.

\begin{theorem}
	\label{thm:25}
	The $7$-qubit UPB of size $11$ is not an $n$-partite UPB for $2\leq n\leq 5$.
\end{theorem}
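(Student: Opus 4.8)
The plan is to deduce Theorem~\ref{thm:25} from Theorem~\ref{thm:main} by a monotonicity argument in the partition lattice of $\{A_1,\dots,A_7\}$, rather than by a fresh case analysis. The key elementary observation is this: if a partition $\mathcal{P}$ of the seven qubits \emph{refines} a partition $\mathcal{Q}$ (every block of $\mathcal{Q}$ is a union of blocks of $\mathcal{P}$), then any $\mathcal{P}$-product vector is automatically a $\mathcal{Q}$-product vector, obtained simply by tensoring together, within each block of $\mathcal{Q}$, the factors that $\mathcal{P}$ splits it into (a tensor product of unit vectors is a unit vector, so this is legitimate in the paper's sense). Moreover the eleven vectors of our set, being product with respect to the $7$-qubit partition, are product with respect to every coarsening. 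Hence if the $7$-qubit UPB of size $11$ fails to be a UPB for some partition $\mathcal{P}$ --- i.e.\ some $\mathcal{P}$-product vector $\ket{\varphi}$ is orthogonal to all eleven vectors --- then that same $\ket{\varphi}$ witnesses that the set is not a UPB for any coarser partition $\mathcal{Q}$. In short, ``not a UPB'' propagates from finer partitions to coarser ones.

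Next I would record, from Theorem~\ref{thm:main}, exactly which $6$-block partitions obtained by merging two systems fail to be UPBs: writing $\mathcal{P}_{ij}$ for the partition that merges $A_i$ with $A_j$ and leaves all other systems as singletons, these are precisely the $\mathcal{P}_{ij}$ with $\{i,j\}\notin\{\{1,2\},\{2,3\}\}$. The relevant refinement criterion is immediate: $\mathcal{P}_{ij}$ refines a partition $\mathcal{Q}$ if and only if $A_i$ and $A_j$ lie in a common block of $\mathcal{Q}$ (all the singleton blocks of $\mathcal{P}_{ij}$ sit inside blocks of $\mathcal{Q}$ trivially, and the block $\{A_i,A_j\}$ does so iff $A_i,A_j$ are together in $\mathcal{Q}$).

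Now fix any partition $\mathcal{Q}$ of $\{A_1,\dots,A_7\}$ into $n$ blocks with $2\le n\le 5$; it suffices to produce a ``bad'' $6$-block refinement of it. Call $\{i,j\}$ a same-block pair of $\mathcal{Q}$ if $A_i,A_j$ lie in a common block. I claim $\mathcal{Q}$ has a same-block pair other than $\{1,2\}$ and $\{2,3\}$. Suppose not. Then no block of $\mathcal{Q}$ has size $\ge 3$ (such a block would contain three same-block pairs), and every nonsingleton block has size $2$ and equals $\{A_1,A_2\}$ or $\{A_2,A_3\}$ (any other two-element block supplies a forbidden pair); since both of these contain $A_2$, $\mathcal{Q}$ has at most one nonsingleton block, whence $n\ge 6$, contradicting $n\le 5$. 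So choose a same-block pair $\{i,j\}\notin\{\{1,2\},\{2,3\}\}$; then $\mathcal{P}_{ij}$ refines $\mathcal{Q}$, and by Theorem~\ref{thm:main} the $7$-qubit UPB of size $11$ is not a UPB for $\mathcal{P}_{ij}$, so by the monotonicity observation it is not a UPB for $\mathcal{Q}$. Since $\mathcal{Q}$ ranged over all $n$-partite mergings with $2\le n\le 5$, the theorem follows.

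I do not expect a genuine obstacle here: once the monotonicity principle and the short counting claim are in place, everything reduces to Theorem~\ref{thm:main} and Lemmas~\ref{le:2}--\ref{le:3}, where the real work was already done. The only point deserving care is the combinatorial claim that \emph{every} coarsening of the $7$-qubit partition into at most $5$ blocks admits a $6$-block refinement distinct from the two ``good'' ones; this is exactly where the hypothesis $n\le 5$ (rather than $n\le 6$) is used, since it excludes the single borderline partition consisting of one merged pair. A purely computational alternative would be to enumerate, for each $n\in\{2,3,4,5\}$, all ways of merging the seven qubits into $n$ parts and exhibit an explicit orthogonal product vector for each --- reusing the vectors constructed in the proofs of Theorem~\ref{thm:main} and Lemmas~\ref{le:2}--\ref{le:3} --- but the lattice argument above bypasses that enumeration entirely.
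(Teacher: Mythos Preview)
Your argument is correct and is genuinely different from the paper's own proof. The paper proves Theorem~\ref{thm:25} via four separate lemmas (one for each $n\in\{2,3,4,5\}$), each splitting into block-size types such as $(1,1,1,1,3)$, $(1,1,1,2,2)$, $(1,2,2,2)$, etc., and disposing of them by a mixture of direct dimension-counting (e.g.\ when some block has size $\ge 4$ so that $2^4>11$ forces an orthogonal vector) and partial monotonicity reductions to Theorem~\ref{thm:main} or to earlier lemmas. In particular, for the $(1,1,1,2,2)$ case the paper treats the subcases ``one merged pair is $A_2A_3$'' and ``one merged pair is $A_1A_2$'' separately, even redoing a row analysis from Lemma~\ref{le:3} for the former.

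Your route replaces all of this by a single uniform step: the partition-lattice monotonicity of ``not a UPB'' plus the short pigeonhole claim that every $n$-block partition with $n\le 5$ must contain a same-block pair $\{i,j\}\notin\{\{1,2\},\{2,3\}\}$, so some bad $\mathcal{P}_{ij}$ from Theorem~\ref{thm:main} refines it. This is both shorter and conceptually cleaner; the paper's approach, by contrast, yields explicit orthogonal product vectors in each case and does not rely on Theorem~\ref{thm:main} for small $n$ (the $n=2,3$ lemmas are self-contained dimension counts). Your observation that the hypothesis $n\le 5$ is exactly what excludes the two good $6$-block partitions is the crux, and it is correctly argued.
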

We give the proof of this theorem in Appendix \ref{app:thm25}.
From  Thoerem \ref{thm:main} and \ref{thm:25}, we have investigated the $7$-qubit UPB of size $11$ in $n$-partitions, where $2\leq n\leq6$. The results show that there exist exactly two cases by merging systems $A_1,A_2$ and $A_2,A_3$. During the analysis, by using the way of combination, we have shown the existence of multipartite UPBs of higher dimension with the parameters in the whole range $[0,2\pi)$.

\section{The construction and entanglement of $7$-qubit positive-partal-transpose entangled states}
\label{sec:app}
In this section, we apply a concrete $7$-qubit UPB of size $11$ to construct multiqubit entangled states, which have positive partial transpose (PPT). Given a bipartite state $\rho\in\mathcal{B}(\mathbb{C}^m\otimes\mathbb{C}^n)$ with the normalization condition $\tr\rho=1$, the partial transpose of $\rho$ is defined as $\rho^\Gamma=\sum_{ij}\ketbra{a_j}{a_i}\otimes\bra{a_i}\rho\ket{a_j}$, where $\{\ket{a_i}\}$ is a basis in $\mathcal{H}_A$. If $\rho^\Gamma$ is positive semidefinite, then we say that $\rho$ is positive partial transpose (PPT). Otherwise, $\rho$ is non-PPT (NPT) \cite{2020The1}. Moreover, the construction of multipartite PPT entangled states and its study are involved a lot in entanglement theory \cite{1998Positive,2007Class,2013Geometry}. A systematic method for such a construction is to employ an $n$-partite UPB $\{\ket{\phi_i}\}_{i=1,\cdots,m}\in\mathbb{C}^{d_1}\otimes\cdots\otimes\mathbb{C}^{d_n}$. We can show that 
\begin{eqnarray}
	\label{eq:pptphi}
	\rho&=&\frac{1}{d_1\cdots d_n-m}(\mathbb{I}_{d_1\cdots d_n}-\sum_{i}\proj{\phi_i})
\end{eqnarray}
is an $n$-partite PPT entangled state of rank $d_1\cdots d_n-m$. 
We use this method to construct a $7$-partite PPT entangled state $\alpha$ by using a concrete $7$-qubit UPB $\{\ket{\varphi_i},i=1,\cdots,11\}$ of size $11$. From \eqref{eq:pptphi}, we have 
\begin{eqnarray}
	\label{eq:geal}
	\alpha&=&\frac{1}{2^7-11}(\mathbb{I}_{2^7}-\sum_{i=1}^{11}\proj{\varphi_i})\nonumber\\
		&=& \frac{1}{2^7-11}(\mathbb{I}_{2^7}-Q),
\end{eqnarray}
where $Q=\sum_{i=1}^{11}\proj{\varphi_i}$. 
We list $\{\ket{\varphi_i}\}$ as follows: 
\begin{eqnarray}
	\label{eq:phi17}
	\ket{\varphi_1}&=& \ket{0,0,0,0,0,0,0},\nonumber\\
	\ket{\varphi_2}&=& \ket{0,\frac{1}{\sqrt{2}}(0+1),\frac{1}{\sqrt{2}}(0+1),\frac{1}{\sqrt{2}}(0+1),1,\frac{1}{\sqrt{2}}(0+1),\frac{1}{\sqrt{2}}(0+1)},\nonumber\\
	\ket{\varphi_3}&=& \ket{1,\frac{1}{\sqrt{2}}(0+1),0,\frac{1}{\sqrt{3}}(0+\sqrt{2}\cdot1),\frac{1}{\sqrt{2}}(0+1),\frac{1}{\sqrt{3}}(0+\sqrt{2}\cdot1),\frac{1}{\sqrt{3}}(0+\sqrt{2}\cdot1)},\nonumber\\
	\ket{\varphi_4}&=&\ket{1,0,\frac{1}{\sqrt{2}}(0+1),\frac{1}{2}(0+\sqrt{3}\cdot1),\frac{1}{\sqrt{3}}(0+\sqrt{2}\cdot1),\frac{1}{2}(0+\sqrt{3}\cdot1),\frac{1}{\sqrt{3}}(\sqrt{2}\cdot0-1)},\nonumber\\
	\ket{\varphi_5}&=& \ket{\frac{1}{\sqrt{2}}(0+1),1,\frac{1}{\sqrt{3}}(0+\sqrt{2}\cdot1),\frac{1}{\sqrt{2}}(0-1),\frac{1}{{2}}(0+\sqrt{3}\cdot1),\frac{1}{\sqrt{3}}(\sqrt{2}\cdot0-1),\frac{1}{{2}}(0+\sqrt{3}\cdot1)},\nonumber\\
	\ket{\varphi_6}&=&\ket{\frac{1}{\sqrt{2}}(0+1),1,0,\frac{1}{\sqrt{3}}(\sqrt{2}\cdot0-1),\frac{1}{\sqrt{5}}(0+2\cdot1),\frac{1}{\sqrt{2}}(0-1),\frac{1}{2}(\sqrt{3}\cdot0-1)},\nonumber\\
	\ket{\varphi_7}&=& \ket{\frac{1}{\sqrt{2}}(0-1),\frac{1}{\sqrt{3}}(0+\sqrt{2}\cdot1),\frac{1}{\sqrt{2}}(0-1),\frac{1}{\sqrt{5}}(0+2\cdot1),\frac{1}{\sqrt{2}}(0-1),1,\frac{1}{\sqrt{5}}(0+2\cdot1)},\nonumber\\
	\ket{\varphi_8}&=&\ket{\frac{1}{\sqrt{2}}(0-1),\frac{1}{\sqrt{3}}(\sqrt{2}\cdot0-1),\frac{1}{\sqrt{2}}(0-1),\frac{1}{\sqrt{3}}(\sqrt{2}\cdot0-1),\frac{1}{\sqrt{5}}(0+2\cdot1),\frac{1}{\sqrt{5}}(0+2\cdot1),1},\nonumber\\
	\ket{\varphi_9}&=& \ket{\frac{1}{\sqrt{3}}(0+\sqrt{2}\cdot1),\frac{1}{\sqrt{3}}(\sqrt{2}\cdot0-1),1,\frac{1}{2}(\sqrt{3}\cdot0-1),\frac{1}{2}(\sqrt{3}\cdot0-1),\frac{1}{\sqrt{5}}(\sqrt{2}\cdot0-1),\frac{1}{\sqrt{2}}(0-1)},\nonumber\\
	\ket{\varphi_{10}}&=&\ket{\frac{1}{\sqrt{3}}(0+\sqrt{2}\cdot1),\frac{1}{\sqrt{3}}(0+\sqrt{2}\cdot1),1,\frac{1}{\sqrt{5}}(2\cdot0-1),\frac{1}{\sqrt{3}}(\sqrt{2}\cdot0-1),\frac{1}{\sqrt{2}}(0-1),\frac{1}{2}(\sqrt{3}\cdot0-1)},\nonumber\\
	\ket{\varphi_{11}}&=& \ket{\frac{1}{\sqrt{3}}(\sqrt{2}\cdot0-1),\frac{1}{\sqrt{2}}(0-1),\frac{1}{\sqrt{3}}(\sqrt{2}\cdot0-1),1,\frac{1}{\sqrt{5}}(\sqrt{2}\cdot0-1),\frac{1}{2}(\sqrt{3}\cdot0-1),\frac{1}{\sqrt{5}}(\sqrt{2}\cdot0-1)}.
\end{eqnarray}

Before investigating the geometric measure of entanglement of $\alpha$, we introduce the definition of geometric measure \cite{2003Geometric,2020The}. 
For any $n$-partite quantum state $\rho$, the measure is defined as
\begin{eqnarray}
\label{eq:gm}
G(\rho):=-\log_2\max_{\ket{\delta_1},\cdots,\ket{\delta_n}}\bra{\delta_1,\cdots,\delta_n}\rho\ket{\delta_1,\cdots,\delta_n},
\end{eqnarray}
where $\ket{\delta_1,\cdots,\delta_n}$ is a normalized product state in $\mathbb{C}^{d_1}\otimes\cdots\otimes\mathbb{C}^{d_n}$. If a $6$-partite state $\sigma\in \mathcal{B}(\mathbb{C}^2\otimes\mathbb{C}^2\otimes\mathbb{C}^2\otimes\mathbb{C}^2\otimes\mathbb{C}^2\otimes\mathbb{C}^4)$, then it implies $d_1=d_2=d_3=d_4=d_5=2$ and $d_6=4$ with $n=6$. Similarly,  it implies that  state  $\alpha\in\mathcal{B}(\mathbb{C}^2\otimes\cdots\otimes\mathbb{C}^2)$ in \eqref{eq:geal} has $d_1=\cdots=d_7=2$. The definition of geometric measure
of entanglement in \eqref{eq:gm} shows that the more systems we combine together, the less
entanglement we obtain. Then we have  
\begin{eqnarray}
	\label{eq:18}
	G(\sigma)\leq G(\alpha).
\end{eqnarray}
So $G(\alpha)$ is the upper bound of $G(\sigma)$, which may be not a tight upper bound. From \eqref{eq:geal} and \eqref{eq:gm}, 
we have 
\begin{eqnarray}
	\label{eq:maxalpha}
	G(\alpha)= -\log_2\max_{\ket{\delta_1},\cdots,\ket{\delta_7}}\bra{\delta_1,\cdots,\delta_7}\alpha\ket{\delta_1,\cdots,\delta_7}. 
\end{eqnarray} 
The evaluation of $G(\alpha)$ is equivalent to 
\begin{eqnarray}
	\label{eq:min}
	\min_{\ket{\delta_1},\cdots,\ket{\delta_7}\in\mathbb{C}^2\otimes\cdots\otimes\mathbb{C}^2}\bra{\delta_1,\cdots,\delta_7}Q\ket{\delta_1,\cdots,\delta_7}
\end{eqnarray}
for $Q\in\mathcal{B}(\mathbb{C}^2\otimes\cdots\otimes\mathbb{C}^2)$ in \eqref{eq:geal}. 
To investigate the minimum of $G(\alpha)$, we consider $\ket{\delta_1,\cdots,\delta_7}\in\mathbb{R}^2\otimes\cdots\otimes\mathbb{R}^2$, and assume 
\begin{eqnarray}
	\label{eq:sixdelta}
\ket{\delta_1}
=
\bma
\sin \upsilon_1\\
\cos\upsilon_1
\ema,
\quad
\ket{\delta_2}
&=&
\bma
\sin\upsilon_2\\
\cos\upsilon_2
\ema,
\quad
\ket{\delta_3}
=
\bma
\sin \upsilon_3\\
\cos\upsilon_3
\ema,
\quad
\ket{\delta_4}
=
\bma
\sin \upsilon_4\\
\cos\upsilon_4
\ema,
\nonumber
\end{eqnarray}
\begin{eqnarray}
	\label{eq:delta7}
\ket{\delta_5}
=
\bma
\sin \upsilon_5\\
\cos\upsilon_5
\ema
\quad,
\ket{\delta_6}
=
\bma
\sin \upsilon_6\\
\cos\upsilon_6
\ema,
\quad
\ket{\delta_7}
=
\bma
\sin \upsilon_7\\
\cos\upsilon_7
\ema,
\end{eqnarray}
where $\upsilon_i\in[0,2\pi),i=1,\cdots,7$. So \eqref{eq:min} becomes 
\begin{eqnarray}
	\label{eq:rfun}
	\label{eq:min2}
	\min_{\ket{\delta_1},\cdots,\ket{\delta_7}\in\mathbb{R}^2\otimes\cdots\otimes\mathbb{R}^2}\bra{\delta_1,\cdots,\delta_7}Q\ket{\delta_1,\cdots,\delta_7}.
\end{eqnarray}
By substituting $\ket{\delta_i}$ in  \eqref{eq:delta7} to the target function in \eqref{eq:rfun}, we obtain that it is a continuous function. Due to the differentiability of function in \eqref{eq:rfun} and its complex Hessian matrix,  we adopt classical one-order optimization method: steepest descent method. This method is simple and efficient {for this problem.} We select the initial point $(\upsilon_1,\cdots,\upsilon_7)=(0,0,0,0,0,0,0)$ and set termination criteria, i.e. $\|g{(x_k)}\|_2\leq 10^{-4}$. For the convenience of readers, we show the algorithm in Alg. \ref{SD}. 

\begin{algorithm}[htb] 
	\caption{Steepest Descent Method} 
	\label{SD}  Solve problem \textbf{function f$(x_k)$} = 	$\min_{\ket{\delta_1},\cdots,\ket{\delta_7}\in\mathbb{C}^2\otimes\cdots\otimes\mathbb{C}^2}\bra{\delta_1,\cdots,\delta_7}Q\ket{\delta_1,\cdots,\delta_7}$;
	\begin{algorithmic}[1] 
		\REQUIRE ~~\\ 
		Initial $x_0=(\upsilon_1,\cdots,\upsilon_7)= (0,\cdots,0)$, 
		$a_0=10$;\\
		\ENSURE ~~\\ 
		$x_{k+1}$ and $\textbf{f}(x_{k+1})$;
		\FOR{$k = 1,\cdots,n$} 
		\label{ code:fram:extract }
		\STATE Derivative the parameter $\upsilon_1,\cdots,\upsilon_7$ of $\ket{\delta_1},\cdots,\ket{\delta_7}$ respectively;\ 
		\label{code:fram:trainbase}
		\STATE Reassign the seven $128\times1$ vectors in a $128\times7$ matrix $M(x_k)$ and do transpose;\ 
		\label{code:fram:add}
		\STATE Let gradient $g(x_k)$ be $M_{x_k}^TQx_k$; \ 
		\STATE $x_{k+1} = x_k-a_kg(x_k)$;\ 
		\STATE Find an $a_{k+1}>0$, i.e. $\textbf{f} (x_{k+1}) < \textbf{f}(x_k)+g^T(x_k)(x_{k+1}-x_k)+\frac{a_{k+1}}{2}\|x_{k+1}-x_k\|_2^2$;
		\ENDFOR
		\STATE \textbf{stopping criterion}: $\|g(x_{k+1})\|_2\leq 10^{-4}$;\
	\end{algorithmic}
\end{algorithm}

The algorithm iterates $121$ times to satisfy the termination criterion. The optimal solution in \eqref{eq:rfun} (also in \eqref{eq:maxalpha}) is about $3.18624\times10^{-5}$, and the optimal value of the function in \eqref{eq:maxalpha} is $6.87041$ ebits. The iteration curve of $G(\alpha)$ is shown in Fig. \ref{fig:matlabmin}. Further, the parameter $(\upsilon_1,\cdots,\upsilon_7)$ of $\ket{\delta_1},\cdots,\ket{\delta_7}$ is $(2.35414, 2.83365, 3.14800, 0.615284, 3.92691, 2.35162, 3.61857)$. 

\begin{figure}[htp]
	\centering
	\includegraphics[scale=0.6,angle=0]{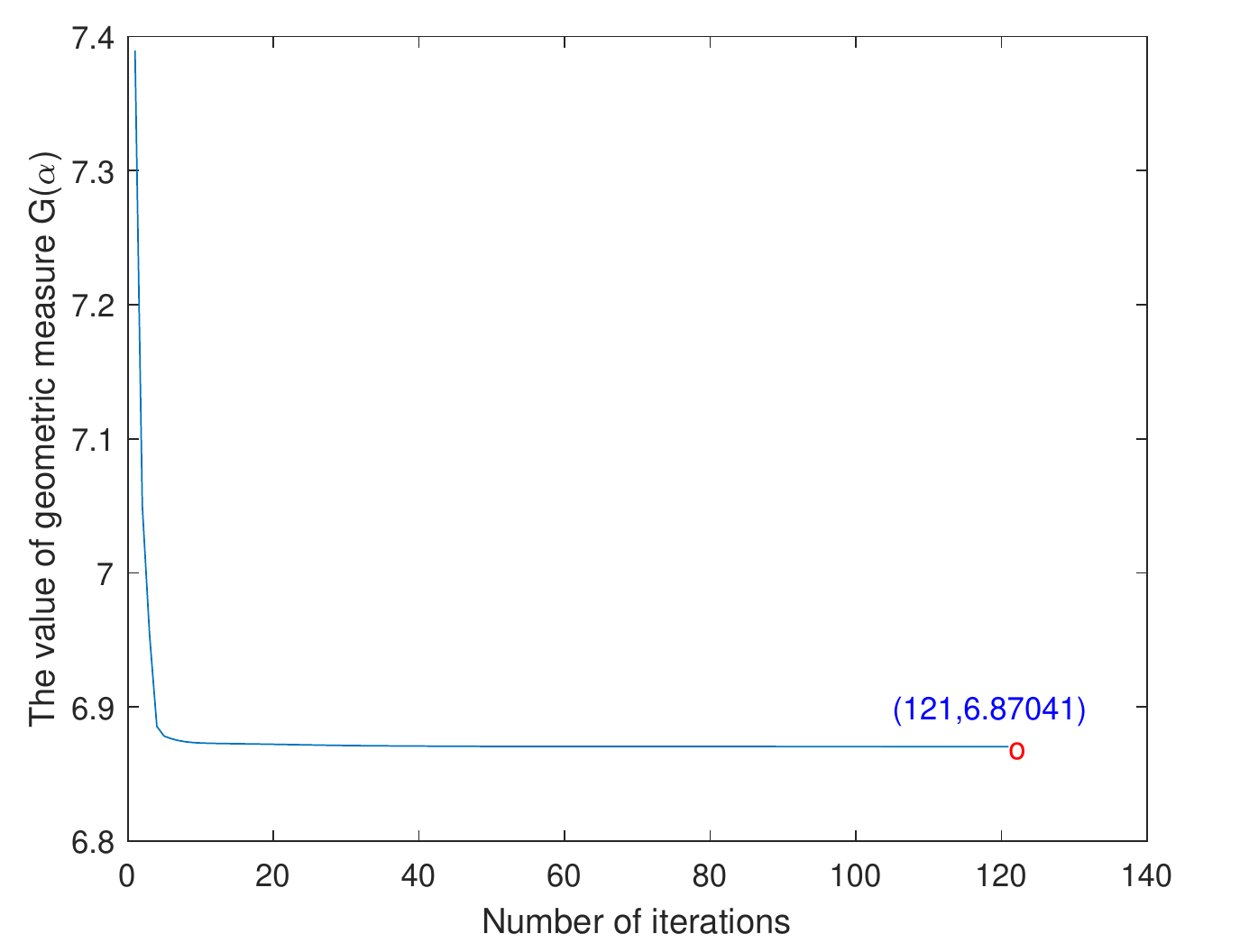}
	\caption{The blue line shows that the entanglement of the PPT state $\alpha$ by using the geometric measure. It implies when the number of iterations is $121$, the entanglement converges to $6.87041$ ebits.} 
	\label{fig:matlabmin}
\end{figure}

{One can verify that the Hessian matrix is quite complex when $\upsilon_1=\cdots=\upsilon_7=\pi/2$, so the  objective function in \eqref{eq:rfun} is not convex. Above algorithm in  Alg. \ref{SD} do not guarantee that the parameter $(2.35414, 2.83365, 3.14800, 0.615284, 3.92691, 2.35162, 3.61857)$ is a global solution. } 
We adopt uniform random sampling to illustrate the reasonable of this solution. Elect ten millon samples by using  Python to verify it in Fig. \ref{fig:pythonmin1}.  
\begin{figure}[htp]
	\centering
	\includegraphics[scale=0.6,angle=0]{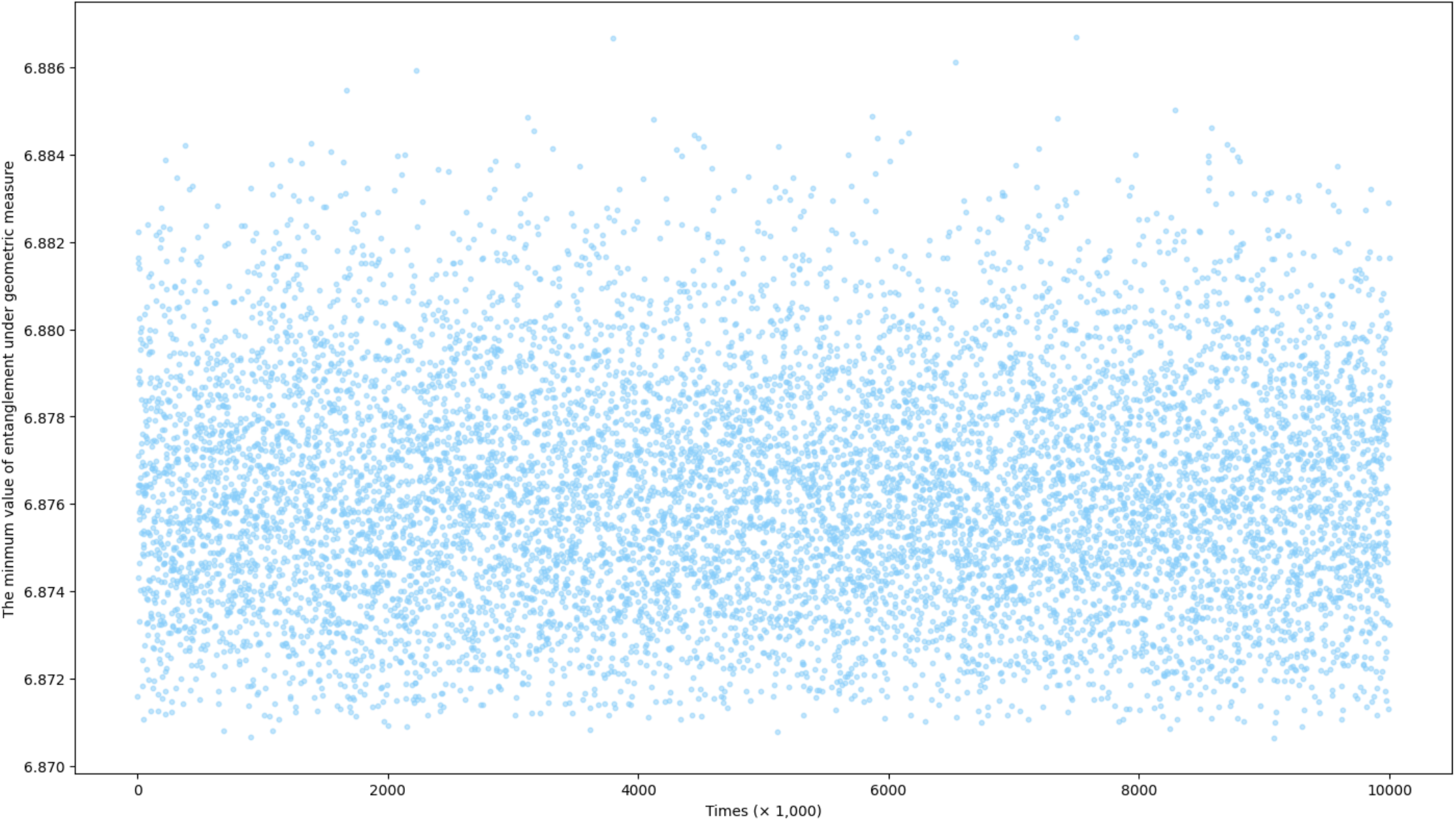}
	\caption{This figure shows that the  ten million entanglement values in \eqref{eq:gm} by using geometric measure with random parameters $\upsilon_1,\cdots,\upsilon_7$ in range $[0,2\pi)$.} 
	\label{fig:pythonmin1}
\end{figure}
Then we obtain that the entanglement is $6.87061$ ebits in \eqref{eq:gm}. Further, the parameter $(\upsilon_1,\cdots,\upsilon_7)$ of $\ket{\delta_1},\cdots,\ket{\delta_7}$ is
$(0.81208, 0.03589, 0.01666, 0.12369, 1.55458, 0.60677, 0.52988)$. Compared with the above results, we obtain that the error is $0.00291$$\%$. Due to the randomness of this method by using Python, we reasonably obtain that optimal value $6.87041$ of function in \eqref{eq:maxalpha} is a global minimum. So it is an upper bound. To conclude, we summarize this result in Theorem \ref{thm:app1}.

\begin{theorem}
	\label{thm:app1}
	For the $7$-qubit PPT state $\alpha$ in Eq. \eqref{eq:geal}, 
	its geometric measure of entanglement is $G(\alpha)\approx6.87041$ ebits with parameter  $(v_1,\cdots,v_7)=(2.35414, 2.83365, 3.14800, 0.615284, 3.92691, 2.35162, 3.61857)$.
\end{theorem}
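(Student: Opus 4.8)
The plan is to read Theorem~\ref{thm:app1} as the output of the constrained optimization already set up in Eqs.~\eqref{eq:maxalpha}--\eqref{eq:min}, solved by two independent numerical routes. First I would record the elementary reduction: since $\alpha=\frac{1}{2^7-11}(\mathbb{I}_{2^7}-Q)$ and product states are normalized, $\bra{\delta_1,\cdots,\delta_7}\alpha\ket{\delta_1,\cdots,\delta_7}=\frac{1}{117}\bigl(1-\bra{\delta_1,\cdots,\delta_7}Q\ket{\delta_1,\cdots,\delta_7}\bigr)$, so maximizing the left-hand side over product states is equivalent to minimizing $f:=\bra{\delta_1,\cdots,\delta_7}Q\ket{\delta_1,\cdots,\delta_7}$, and $G(\alpha)=-\log_2\frac{1}{117}\bigl(1-\min f\bigr)$. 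Because all the vectors $\ket{\varphi_1},\cdots,\ket{\varphi_{11}}$ in \eqref{eq:phi17} have real entries, $Q$ is real symmetric, and a standard argument (split each $\ket{\delta_i}$ into real and imaginary parts and optimize the resulting real quadratic form one tensor factor at a time) shows the minimum of $f$ over $\mathbb{C}^2\otimes\cdots\otimes\mathbb{C}^2$ is already attained on $\mathbb{R}^2\otimes\cdots\otimes\mathbb{R}^2$. Hence it suffices to use the parametrization $\ket{\delta_i}=(\sin\upsilon_i,\cos\upsilon_i)^{T}$ of \eqref{eq:sixdelta}--\eqref{eq:delta7}, which turns $f$ into an explicit smooth trigonometric polynomial $f(\upsilon_1,\cdots,\upsilon_7)$ on the torus $[0,2\pi)^7$.

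Next I would run Algorithm~\ref{SD}. Starting from $x_0=(0,\cdots,0)$, at each step I compute the gradient $g(x_k)=M_{x_k}^{T}Qx_k$, where $M_{x_k}$ is the $128\times7$ matrix assembled from the partial derivatives of $\ket{\delta_1,\cdots,\delta_7}$ in $\upsilon_1,\cdots,\upsilon_7$, take a steepest-descent step $x_{k+1}=x_k-a_kg(x_k)$ with a backtracking line search enforcing the stated Armijo-type decrease, and stop once $\|g(x_{k+1})\|_2\le 10^{-4}$. I would report that this terminates after $121$ iterations at $x^\star=(2.35414,2.83365,3.14800,0.615284,3.92691,2.35162,3.61857)$ with $f(x^\star)\approx 3.18624\times10^{-5}$, whence $G(\alpha)=-\log_2\frac{1}{117}\bigl(1-f(x^\star)\bigr)\approx 6.87041$ ebits, displaying the monotone convergence curve (Fig.~\ref{fig:matlabmin}) as a consistency check on the iterates.

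Since $f$ is not convex — the Hessian at $\upsilon_1=\cdots=\upsilon_7=\pi/2$ is indefinite — steepest descent only certifies a stationary point, so the third step is an independent global check. I would draw $10^{7}$ points uniformly from $[0,2\pi)^7$, evaluate the geometric measure \eqref{eq:gm} at each, and report that the best sampled value is $6.87061$ ebits (Fig.~\ref{fig:pythonmin1}), differing from $6.87041$ by a relative error of $0.00291$\%. The agreement of these two unrelated procedures, together with the fact that no sample improves on the steepest-descent output, is the evidence offered that $f(x^\star)$ is the global minimum, and hence that $G(\alpha)\approx 6.87041$; combined with \eqref{eq:18} this also yields the claimed upper bound on $G(\sigma)$.

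The main obstacle is precisely this last point: neither steepest descent nor a finite Monte Carlo sample is a rigorous proof of global optimality, since $f$ is a non-convex function on a compact torus that could in principle have a slightly deeper minimum in a region the sampling missed. To close the gap one would need a genuine lower bound on $\min f$ — for instance a sum-of-squares/Lasserre relaxation after the Weierstrass substitution $t_i=\tan(\upsilon_i/2)$ makes the problem polynomial, an interval-arithmetic branch-and-bound over $[0,2\pi)^7$, or a Lipschitz-constant estimate for $f$ turning the $10^{7}$ samples into a certified $\varepsilon$-net of the torus. Absent such a certificate, the statement should be read — as the surrounding text does — as a numerically certified value of $G(\alpha)$ at the indicated parameter point, and, via \eqref{eq:18}, as an upper bound for $G(\sigma)$.
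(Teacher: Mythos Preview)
Your proposal is correct and follows essentially the same approach as the paper: the reduction from \eqref{eq:maxalpha} to the real minimization \eqref{eq:rfun}, the steepest-descent run of Algorithm~\ref{SD} from the origin terminating after $121$ iterations at the stated parameter point, and the $10^{7}$-sample Monte Carlo cross-check yielding $6.87061$ with relative discrepancy $0.00291\%$. You are in fact slightly more careful than the paper in two places---you justify the restriction to real product vectors via the reality of $Q$, and you explicitly flag (and suggest remedies for) the absence of a rigorous global-optimality certificate---whereas the paper simply asserts the restriction and concludes the value is ``reasonably'' global; but the substance of the argument is the same.
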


It is the main result in this section. Based on the results in  Theorem \ref{thm:main} and \ref{thm:25}, from \eqref{eq:phi17}, by merging system $A_1, A_2$ or $A_2,A_3$, we can construct $6$-partite PPT entangled states by using the two $6$-partite UPBs of size $11$, respectively. From \eqref{eq:18}, we obtain that  the range of their geometric measure of entanglement is $(0,6.87041]$. The results have shown the numerical value of entanglement of  multipartite PPT states of higher dimensions. In addition, we obtain the upper bound by using the optimization method and probabilistic methods, which supply new ways to compute the entanglement value of multipartite entangled states.  





\section{Conclusions}
\label{sect:conclu}

In this paper, we have applied $7$-qubit UPB of size $11$ to construct more UPBs by merging systems. We have shown that there are exactly two UPBs in space $\mathbb{C}^2\otimes\mathbb{C}^2\otimes\mathbb{C}^2\otimes\mathbb{C}^2\otimes\mathbb{C}^2\otimes\mathbb{C}^4$. No UPBs are in two, three, four, five partitions.  Further, we have worked out the entanglement of a concrete $7$-qubit PPT entangled state using the geometric measure of entanglement and given a concrete value.

One problem is to construct more multipartite UPBs using the existing multiqubit UPBs. Moreover, it is a challenge to find a general method to solve $n$-partite UPBs and consider its applications in various information tasks. An another direction is to further investigate an entanglement value such that the $6$-partite PPT entangled state is genuinely entangled. 

\section*{Disclosure statement}
No potential conflict of interest was reported by the authors.

\section*{Acknowledgments}
Authors were supported by NSFC (Grant No. 11871083).

\appendix

\section{The proof of Theorem \ref{thm:25}}
\label{app:thm25}
We prove this theorem in the four following lemma \ref{le:five} - \ref{le:two}, which show that the $7$-qubit UPB of size $11$ is not an $n$-partite UPB for $2\leq n \leq 5$. 
\begin{lemma}
	\label{le:five}
	The $7$-qubit UPB of size $11$ is not a UPB in any five partitions.
\end{lemma}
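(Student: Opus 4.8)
The goal is to show that merging three (or more) of the seven systems of the $11\times 7$ UOM $A$ in \eqref{uom:1} never yields a UPB; i.e.\ there is always an extending product vector. The natural approach is to reduce to the $6$-partite analysis already carried out in Lemmas \ref{le:2} and \ref{le:3} and Theorem \ref{thm:main}. First I would observe that any $5$-partite merging of $A$ is obtained from a $6$-partite merging by merging two of the six (already partially merged) systems further; and by Theorem \ref{thm:main}, \emph{with only two exceptions}, the $6$-partite object is already not a UPB, and the explicit extending product vector produced there survives (after obvious re-grouping of tensor factors, since merging systems only enlarges the local Hilbert space and a product vector in the finer partition remains a product vector in the coarser one). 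So the only $5$-partite mergings that are not immediately handled are those refining the two genuine $6$-partite UPBs, namely the ones containing the block $A_1A_2$ or the block $A_2A_3$ among their parts.

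\textbf{The core case analysis.} For those remaining $5$-partite mergings I would argue directly, in the same style as Lemma \ref{le:2}: pick the merged system of largest dimension (a $\mathbb{C}^4$ or $\mathbb{C}^8$ factor, depending on whether we have one triple or a $2+2$ split among five parts), call its component of the candidate extending vector $\ket{\varphi_k}$, and count how many rows of $A$ a vector in the remaining (all qubit, or qubit-plus-$\mathbb{C}^4$) factors can be simultaneously orthogonal to. Using the multiplicity data already tabulated — on systems $A_4,\dots,A_7$ the only multiplicity-two entries sit in rows $6,8$ or $6,10$; on $A_3$ the unique multiplicity-three entry $\ket{a_{1,3}}$ is in rows $1,3,6$; on $A_1,A_2$ multiplicities are at most two with the row-patterns listed in \eqref{eq:labelab} and \eqref{eq:a2a3} — one gets an upper bound $r$ on the number of rows killable by the "small" factors, so $\ket{\varphi_k}$ must be orthogonal to the remaining $11-r$ rows. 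For a $\mathbb{C}^4$ this forces every $4\times 4$ minor of the corresponding row-submatrix to vanish, and for $\mathbb{C}^8$ a $\mathbb{C}^4$-analogue of this with $8\times 8$; I would then either exhibit a concrete extending vector (as in the proof of Theorem \ref{thm:main}, by choosing explicit angles so that the relevant determinants vanish identically on $[0,2\pi)$), thereby proving the non-UPB property, or — in the rare case a UPB might slip through — run the same determinant-tabulation argument (Tables \ref{tab:1}, \ref{tab:2}) to derive a contradiction from the number of forced vanishing minors.

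\textbf{Organizing the write-up.} Concretely I would list all the $5$-partite merging types up to permutation: either one merged triple $A_iA_jA_k$ (a $\mathbb{C}^8$ factor) together with four qubit systems, or two merged pairs $A_iA_j$, $A_kA_l$ (two $\mathbb{C}^4$ factors) together with three qubit systems. Then split off the subfamilies already covered by Theorem \ref{thm:main} (those not built on $A_1A_2$ or $A_2A_3$), and treat the handful of genuinely new configurations one at a time, in each case reusing the orthogonality-counting inequality plus the determinant tables. The same template then handles $n=4,3,2$ in the subsequent lemmas of the Appendix, since a coarser merging only makes it easier to find an extending product vector — indeed one can largely just "push forward" the extending vectors found here.

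\textbf{Main obstacle.} The hard part will not be the combinatorial bookkeeping but ensuring the extending product vector found for a $6$- or $5$-partite merging genuinely exists \emph{for the whole parameter range} rather than only at special angles — the paper is careful to distinguish these (cf.\ the discussion around \eqref{eq:fun} and Fig.~\ref{fig}). So the delicate step is, for each new configuration, to choose the components of $\ket{\varphi_k}$ in the large $\mathbb{C}^4$ (or $\mathbb{C}^8$) factor so that the required determinants vanish \emph{identically}, i.e.\ to identify a sub-collection of rows whose $A_i$-,$A_j$-columns repeat a pattern of the shape $\begin{bmatrix} a&*\\ a&*\\ *&b\\ *&b\end{bmatrix}$ or $\begin{bmatrix} a&b\\ a&b\\ *&*\\ *&*\end{bmatrix}$ (exactly the two normal forms isolated in Theorem \ref{thm:main}); once such a structural pattern is located the construction goes through uniformly, and if no such pattern exists one falls back on the contradiction argument. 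I expect every new $5$-partite case to fall into the "structural pattern exists" alternative, so the lemma will follow by explicit construction.
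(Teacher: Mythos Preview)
Your reduction strategy---pass to a $6$-partite refinement and invoke Theorem~\ref{thm:main}---is exactly what the paper does, but your execution of the residual cases is far heavier than necessary. The paper splits into only the two partition types $(1,1,1,1,3)$ and $(1,1,1,2,2)$. For $(1,1,1,1,3)$ pure dimension counting suffices: any vector in the $\mathbb{C}^{8}$ factor can be chosen orthogonal to seven of the eleven rows, and the four remaining rows are killed one apiece by the four qubit factors via $\ket{\phi_i}=\ket{g_{r_i,i}'}$; no multiplicity bookkeeping and no $8\times 8$ determinants appear. For $(1,1,1,2,2)$ the paper notes that whenever one of the two merged pairs is \emph{not} $A_1A_2$ or $A_2A_3$, the $6$-partition keeping only that pair merged is already not a UPB by Theorem~\ref{thm:main}, and the extending vector passes to the coarser partition; since the two exceptional pairs share $A_2$ they can never both occur, so this disposes of the $A_1A_2$ branch immediately, and the single remaining $A_2A_3$ branch is handled by reusing the five-row patterns $(1,3,6,7,10)$ and $(1,3,6,8,9)$ from Lemma~\ref{le:3}(b) together with one $\mathbb{C}^4$ factor and three qubits to cover the other six rows.

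Your ``Main obstacle'' is inverted: the delicate parameter-range analysis around \eqref{eq:fun} and Tables~\ref{tab:1}--\ref{tab:2} is what one needs to prove something \emph{is} a UPB (no extending vector for \emph{any} parameters). Here we are proving the opposite, so it is enough to \emph{exhibit} an extending product vector, and structural choices based on the UOM orthogonality relations (take primed entries on qubit factors, absorb the rest by dimension in the large factor) work for all parameter values automatically. Your determinant-tabulation plan would succeed, but it is solving a harder problem than the lemma asks.
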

\begin{proof}
	We denote $(s,x,y,z,v)$ as the number of the merged systems  in five partitions. Since the $7$-qubit UPB of size $11$ is on seven systems, up to systems permutation, we obtain two cases $(1,1,1,1,3)$ and $(1,1,1,2,2)$.
	
	For the first case $(1,1,1,1,3)$, it implies that the product vectors of $A$  in \eqref{uom:1} are in space $\mathbb{C}^2\otimes\mathbb{C}^2\otimes\mathbb{C}^2\otimes\mathbb{C}^2\otimes\mathbb{C}^{2^3}$. Up to systems permutations of $A$, we may assume that the form of product vectors of $A$ is  
	\begin{eqnarray}
		\label{eq:11113}
		\ket{g_{i,1},g_{i,2},g_{i,3},g_{i,4},g_{i,5}}\in\mathbb{C}^2\otimes\mathbb{C}^2\otimes\mathbb{C}^2\otimes\mathbb{C}^2\otimes\mathbb{C}^{2^3},
	\end{eqnarray}
	where $i=1,\cdots,11$. Suppose that there exists a product vector $\ket{\phi_1,\cdots,\phi_5}$ such that it is orthogonal to all rows of $A$ up to systems  permutation. Since $\ket{\phi_5}\in\mathbb{C}^{2^3}$, it is orthogonal to seven linearly independent vectors corresponding to seven different rows. By permutating rows, we may assume that $\ket{\phi_5}$ is orthogonal to the first seven rows. Then there exists a product vector $\ket{g_{8,1}',g_{9,2}',g_{10,3}',g_{11,4}',\phi_5}$ with $\ket{\phi_1,\phi_2,\phi_3,\phi_4}=\ket{g_{8,1}',g_{9,2}',g_{10,3}',g_{11,4}'}$ such that it is orthogonal to all rows of $A$. So we obtain that for the case $(1,1,1,1,3)$, the $7$-qubit UPB of size $11$ is not a UPB in this five partitions.
	
	For the case $(1,1,1,2,2)$, it implies that the product vectors of $A$  in \eqref{uom:1} are in space $\mathbb{C}^2\otimes\mathbb{C}^2\otimes\mathbb{C}^2\otimes\mathbb{C}^4\otimes\mathbb{C}^{4}$. It is known that if a set of product vector is not a UPB in space  $\mathbb{C}^2\otimes\mathbb{C}^2\otimes\mathbb{C}^2\otimes\mathbb{C}^2\otimes\mathbb{C}^2\otimes\mathbb{C}^4$, then it is also not a UPB in space $\mathbb{C}^2\otimes\mathbb{C}^2\otimes\mathbb{C}^2\otimes\mathbb{C}^4\otimes\mathbb{C}^4$. In Theorem \ref{thm:main}, it has shown that $7$-qubit UPB is still a UPB in six partitions by merging systems  $A_2,A_3$ and $A_1,A_2$. We further consider whether it is a UPB in this five partitions. 
	Up to systems permutation of $A$, we may assume that the form of product vectors of $A$ is  
	\begin{eqnarray}
		\label{eq:mn23}
		\ket{g_{i,1},g_{i,2},g_{i,3},g_{i,4},g_{i,5}}&\in&\mathcal{H}_{A_l}\otimes\mathcal{H}_{A_j}\otimes\mathcal{H}_{A_k}\otimes\mathcal{H}_{A_mA_n}\otimes\mathcal{H}_{A_2A_3}\nonumber\\&=&\mathbb{C}^2\otimes\mathbb{C}^2\otimes\mathbb{C}^2\otimes\mathbb{C}^4\otimes\mathbb{C}^{4},
	\end{eqnarray}
	where $i=1,\cdots,11$ and $l,j,k,m,n\in\{1,\cdots,7\}\backslash\{2,3\}$ and they are distinct. Suppose that there exists a product vector $\ket{\phi_1,\cdots,\phi_5}$ such that it is orthogonal to all rows of $A$ up to permutations of the systems. According to Lemma \ref{le:3} (b), the result shows that $\ket{\phi_5}$ can be orthogonal to five rows of $A$ in \eqref{uom:1}, i.e., the two cases of five rows are $(1,3,6,7,10)$ and $(1,3,6,8,9)$. Then it implies that $\ket{\phi_1,\phi_2,\phi_3,\phi_4}\in\mathbb{C}^2\otimes\mathbb{C}^2\otimes\mathbb{C}^2\otimes\mathbb{C}^4$ is orthogonal to the remaining six rows. If $\ket{\phi_5}$ is orthogonal to row $1,3,6,7,10$, then we permutate the rows of $A$ such that the five rows are the first five rows.  Then we find a product vector $\ket{\phi_1,\cdots,\phi_5}$, which is orthogonal to all rows of $A$, where $\ket{\phi_1,\phi_2,\phi_3} = \ket{g_{6,1}',g_{7,2}',g_{8,3}'}$ and $\ket{\phi_4}\in \mathbb{C}^4$ is orthogonal to $\ket{g_{9,4}},\ket{g_{10,4}},\ket{g_{11,4}}$. So this case does not hold. Similarly, by using the same method, if $\ket{\phi_5}$ is orthogonal to row $1,3,6,8,9$, one can verify that there exists a product vector such that it is orthogonal to all rows of $A$. So this case does not hold.
	
	
	Then we consider the case that one of the merged systems is $A_1,A_2$. Above Eq. \eqref{eq:mn23}, we know that if the set of product vectors in space  $\mathcal{H}_{A_l}\otimes\cdots\otimes\mathcal{H}_{A_1}\otimes\mathcal{H}_{A_2}\otimes\mathcal{H}_{A_mA_n}$ is not a UPB, then it is also not a UPB in space  $\mathcal{H}_{A_l}\otimes\cdots\otimes\mathcal{H}_{A_1A_2}\otimes\mathcal{H}_{A_mA_n}$, where $l,m,n\in\{1,\cdots,7\}\backslash\{1,2\}$. In Theorem \ref{thm:main}, we have shown that $7$-qubit UPB is not a UPB in space $\mathcal{H}_{A_l}\otimes\cdots\otimes\mathcal{H}_{A_1}\otimes\mathcal{H}_{A_2}\otimes\mathcal{H}_{A_mA_n}$. So we obtain that in this case, the $7$-qubit UPB of size $11$ is not a five-partite UPB.
	
	Hence,  the $7$-qubit UPB of size $11$ is not a UPB in any five partitions.
\end{proof}

This lemma shows that there are no five-partite UPBs. In the following Lemma \ref{le:6} - \ref{le:two}, we will consider the existence of four, three-partite and bipartite UPBs constructed by the $7$-qubit UPB of size $11$.

\begin{lemma}
	\label{le:6}
	The $7$-qubit UPB of size $11$ is not a UPB in any four partitions.
\end{lemma}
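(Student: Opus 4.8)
The plan is to reduce the four-partition case to the already-established five-partition result (Lemma \ref{le:five}), exactly mirroring the dimensionality-monotonicity trick already used inside the proof of Lemma \ref{le:five} for the $(1,1,1,2,2)$ sub-case. First I would enumerate, up to system permutation, the ways to split the seven qubits into four groups: the possible size profiles are $(1,1,1,4)$, $(1,1,2,3)$, and $(1,2,2,2)$. The crucial observation is that every four-partition arises from some five-partition by merging two of its parts, so if the $7$-qubit UPB of size $11$ fails to be a UPB in a given five partitioning then it also fails in every coarser four partitioning obtained from it: a product vector orthogonal to all eleven rows in the finer grouping is still a product vector (its tensor factors on the merged pair combine into a single factor in the larger local space) orthogonal to all rows in the coarser grouping.

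Next I would make this reduction explicit profile by profile. For $(1,1,1,4)$: merging any two of the three singletons in a $(1,1,1,1,3)$ five-partition (which Lemma \ref{le:five} shows is never a UPB) yields a $(1,1,2,3)$ four-partition, and merging the size-$2$ and size-$3$ blocks of a $(1,1,1,2,2)$-type five-partition — wait, more carefully: a $(1,1,1,4)$ profile refines to $(1,1,1,1,3)$ by splitting the $\mathbb{C}^4 = \mathbb{C}^2\otimes\mathbb{C}^2$ block, and Lemma \ref{le:five} handled $(1,1,1,1,3)$; similarly $(1,1,2,3)$ refines to $(1,1,1,1,3)$ by splitting the size-$2$ block, and $(1,2,2,2)$ refines to $(1,1,1,2,2)$ by splitting one of the $\mathbb{C}^4$ blocks. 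In each case the five-partition that sits below has been shown in Lemma \ref{le:five} not to be a UPB, so the same explicit product vector (from the proof of Lemma \ref{le:five}) orthogonal to all rows of $A$ survives the merge. Hence $A$ is not a UPB in any four partition.

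The only point requiring a little care — and the main obstacle — is that Lemma \ref{le:five} does not literally assert ``$A$ is not a UPB in \emph{every} $(1,1,1,1,3)$ grouping'' in the form needed: its proof exhibits, up to system permutation, an explicit orthogonal product vector for the representative $(1,1,1,1,3)$ and $(1,1,1,2,2)$ groupings. So I would first note that since ``being a UPB'' is invariant under permuting systems and under local unitaries (the fact recalled in Section~\ref{sect:1}), the non-UPB conclusion of Lemma \ref{le:five} applies to \emph{all} groupings of each profile, not just the chosen representatives; then the refinement argument above goes through uniformly. I expect the write-up to be short: the combinatorial enumeration of the three four-partition profiles and their refinements to a five-partition profile already covered by Lemma \ref{le:five} is the whole content, and no new computation with the matrix $A$ in \eqref{uom:1} is needed.
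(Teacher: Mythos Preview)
Your proposal is correct and follows essentially the same approach as the paper: enumerate the three size profiles $(1,1,1,4)$, $(1,1,2,3)$, $(1,2,2,2)$ and reduce each to a five-partition already handled by Lemma~\ref{le:five} via the ``merging can only destroy, never create, the UPB property'' monotonicity. The one small difference is that for the profile $(1,1,1,4)$ the paper does not bother with the refinement to $(1,1,1,1,3)$ but instead argues directly that $\dim\mathbb{C}^{2^4}=16>11$, so a single vector in the big factor can be chosen orthogonal to all eleven rows; your uniform reduction is equally valid and arguably cleaner.
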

\begin{proof}
	We denote $(s,x,y,z)$ as the number of merging systems in four partitions. Since the $7$-qubit UPB of size $11$ is on seven systems, up to systems permutation, we obtain three cases $(1,1,1,4),(1,1,2,3)$ and $(1,2,2,2)$. For the first case $(1,1,1,4)$, one can verify that there exists a product vector $\ket{\phi_1,\phi_2,\phi_3,\phi_4}\in\mathbb{C}^2\otimes\mathbb{C}^2\otimes\mathbb{C}^2\otimes\mathbb{C}^{2^4}$ such that it is orthogonal to all rows of $A$. Because the number of the orthonormal vectors in space $\mathbb{C}^4$ is $2^4>11$, we obtain that there exists a $\ket{\phi_4}\in\mathbb{C}^{2^4}$, which is orthogonal to all rows of $A$. So in this case, the $7$-qubit UPB of size $11$ is not a UPB. For case $(1,1,2,3)$, up to systems permutation of $A$ in \eqref{uom:1}, it implies that the product vectors of $A$ are in space $\mathbb{C}^2\otimes\mathbb{C}^2\otimes\mathbb{C}^4\otimes\mathbb{C}^8$. 
	
	If a set of product vectors in space  $\mathbb{C}^2\otimes\mathbb{C}^2\otimes\mathbb{C}^2\otimes\mathbb{C}^2\otimes\mathbb{C}^8$ is not a UPB, then it is not  a UPB in space $\mathbb{C}^2\otimes\mathbb{C}^2\otimes\mathbb{C}^4\otimes\mathbb{C}^8$. From the results in Lemma \ref{le:five} below \eqref{eq:11113},  the $7$-qubit UPB of size $11$ is not a UPB in this case. Similarly, for case $(1,2,2,2)$, 
	according to the results in Lemma \ref{le:five}, it has shown that the $7$-qubit UPB of size $11$ is not a UPB in space $\mathbb{C}^2\otimes\mathbb{C}^2\otimes\mathbb{C}^2\otimes\mathbb{C}^4\otimes\mathbb{C}^{4}$, then it is also not a UPB in space $\mathbb{C}^2\otimes\mathbb{C}^4\otimes\mathbb{C}^4\otimes\mathbb{C}^4$. Thus,  the $7$-qubit UPB of size $11$ is not a UPB in any four partitions.
\end{proof}

\begin{lemma}
	\label{le:7}
	The $7$-qubit UPB of size $11$ is not a UPB in any three partitions.
\end{lemma}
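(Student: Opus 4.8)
The plan is to enumerate all $3$-partitions of the seven qubits and dispatch each type, in the spirit of Lemmas \ref{le:five} and \ref{le:6}. Up to a permutation of the systems (which preserves the UPB property), a partition of the $7$ systems into three groups has group sizes $(1,1,5)$, $(1,2,4)$, $(1,3,3)$, or $(2,2,3)$; accordingly the product vectors of $A$ in \eqref{uom:1} lie in $\mathbb{C}^2\otimes\mathbb{C}^2\otimes\mathbb{C}^{2^5}$, $\mathbb{C}^2\otimes\mathbb{C}^4\otimes\mathbb{C}^{2^4}$, $\mathbb{C}^2\otimes\mathbb{C}^8\otimes\mathbb{C}^8$, or $\mathbb{C}^4\otimes\mathbb{C}^4\otimes\mathbb{C}^8$, respectively.

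For the first two types I would argue by pure dimension counting, exactly as in the case $(1,1,1,1,3)$ of Lemma \ref{le:five}. In $\mathbb{C}^2\otimes\mathbb{C}^2\otimes\mathbb{C}^{2^5}$ the $11$ product vectors, restricted to the last factor, span a subspace of dimension at most $11<32$, so there is a nonzero $\ket{\phi_3}\in\mathbb{C}^{2^5}$ orthogonal to all of them; then $\ket{\phi_1,\phi_2,\phi_3}$ is orthogonal to every row of $A$ for arbitrary $\ket{\phi_1}\in\mathbb{C}^2$ and $\ket{\phi_2}\in\mathbb{C}^2$, so $A$ is not a UPB in this partition. The same argument applies to $\mathbb{C}^2\otimes\mathbb{C}^4\otimes\mathbb{C}^{2^4}$ because $16>11$.

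For the remaining types $(1,3,3)$ and $(2,2,3)$, where every group has size at most $3$ and one factor alone is not large enough, I would invoke the refinement principle used repeatedly above: if a product vector is orthogonal to all rows of $A$ in some finer partition, then after merging systems it is still a product vector orthogonal to all rows, so failing to be a UPB in a finer partition forces failure in every coarser one. A $(1,3,3)$ partition refines to a $(1,1,1,1,3)$ partition (split one of the two $3$-groups into singletons), which is not a UPB by Lemma \ref{le:five}; a $(2,2,3)$ partition refines to a $(1,1,2,3)$ partition (split one of the $2$-groups), which is not a UPB by Lemma \ref{le:6} --- or, alternatively, to $(1,1,1,2,2)$, also handled by Lemma \ref{le:five}. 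Hence $A$ is not a UPB in any of these partitions either, and combining the four cases proves the lemma.

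The argument is mostly bookkeeping; the only point needing care is checking that the two ``small-group'' types genuinely reduce to partitions already settled in Lemmas \ref{le:five} and \ref{le:6}, and that the treatment of $(1,1,1,1,3)$ in Lemma \ref{le:five} does not depend on which three qubits are merged --- which it does not, since that argument uses only that the merged factor has dimension $8$ together with the fact that each qubit factor contributes a vector orthogonal to one prescribed row.
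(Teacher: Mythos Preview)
Your proof is correct. For the partitions of type $(1,1,5)$ and $(1,2,4)$ you argue exactly as the paper does, by a pure dimension count on the largest factor. For the remaining types $(1,3,3)$ and $(2,2,3)$ your route differs from the paper's: the paper treats these two cases directly, writing the product vectors as $\ket{g_j,h_j,f_j}$ with $\ket{f_j}\in\mathbb{C}^{2^3}$, splitting the small factor(s) back into qubit components $\ket{a_j,b_j,c_j}$ (resp.\ $\ket{a_j,b_j}$, $\ket{c_j,d_j}$), using four of those qubit components to kill four chosen rows, and then choosing $\ket{x_1}\in\mathbb{C}^{2^3}$ orthogonal to the remaining seven $\ket{f_j}$'s. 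You instead invoke the coarsening principle --- that failure to be a UPB in a finer partition propagates to every coarser one --- and reduce $(1,3,3)$ to the $(1,1,1,1,3)$ case of Lemma~\ref{le:five} and $(2,2,3)$ to the $(1,1,2,3)$ case of Lemma~\ref{le:6}.

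Your reduction is legitimate and in fact more economical: the paper's ``direct'' arguments for $(1,3,3)$ and $(2,2,3)$ are, once one unpacks them, nothing but the $(1,1,1,1,3)$ construction rewritten with the qubit factors regrouped, so you are avoiding a repetition rather than skipping a new idea. The paper itself uses exactly this coarsening principle in Lemma~\ref{le:6} to dispose of $(1,1,2,3)$ and $(1,2,2,2)$, so your approach is fully in its spirit. Your closing remark that the $(1,1,1,1,3)$ argument is insensitive to which three qubits are merged is the right thing to check and is indeed true.
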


\begin{proof}
	We denote $(x,y,z)$ as the number of merging systems in three partitions.  Since the $7$-qubit UPB of size $11$ is on seven systems, up to systems permutation, we obtain four cases $(1,1,5),(1,2,4),(1,3,3)$ and $(2,2,3)$.
	
	For the case $(1,1,5)$, it implies that the product vectors of $A$ in \eqref{uom:1} are in space $\mathbb{C}^2\otimes\mathbb{C}^2\otimes\mathbb{C}^{2^{5}}$. Furthermore, if we merge any five systems of $A_i,i=1,\cdots,7$  then evidently they are in $\mathbb{C}^{2^{5}}$. For example, when we merge systems $A_3,\cdots,A_7$, the set
	of $7$-qubit product vectors corresponding to $A$ in \eqref{uom:1} become product vectors in space  $\mathbb{C}^2\otimes\mathbb{C}^2\otimes\mathbb{C}^{2^{5}}=\mathcal{H}_{A_1}\otimes\mathcal{H}_{A_2}\otimes\mathcal{H}_{A_3,\cdots,A_7}$. We may assume that the product vectors of $A$ are  $\ket{g_j,h_j,f_j}\in\mathbb{C}^2\otimes\mathbb{C}^2\otimes\mathbb{C}^{2^{5}},j=1,\cdots,11$. Because the number of the  orthonormal vectors in space $\mathbb{C}^{2^{5}}$ is $32>11$, there must exist a vector $\ket{x_1}$ such that it is orthogonal to every $\ket{f_j}$ and $\ket{x_1}\neq\ket{f_j},j=1,\cdots,11$. So regardless of  the five merged systems in $A$, the case $(1,1,5)$ does not hold. Similarly, in case $(1,2,4)$, we assume that the product vectors of $A$ are  $\ket{g_j,h_j,f_j}\in\mathbb{C}^2\otimes\mathbb{C}^{2^{2}}\otimes\mathbb{C}^{2^{4}},j=1,\cdots,11$. Because the number of the  orthonormal vectors in space $\mathbb{C}^{2^{4}}$ is $16>11$, there must exist a vector $\ket{x_1}\in\mathbb{C}^{2^{4}}$ such that it is orthogonal to every $\ket{f_j}$ and $\ket{x_1}\neq\ket{f_j},j=1,\cdots,11$. So we obtain that the case $(1,2,4)$ does not hold.
	
	For the case $(1,3,3)$, we assume that the product vectors of $A$ in \eqref{uom:1}  are  $\ket{g_j,h_j,f_j}\in\mathbb{C}^2\otimes\mathbb{C}^{2^{3}}\otimes\mathbb{C}^{2^{3}}$ with $\ket{h_j}=\ket{a_j,b_j,c_j},j=1,\cdots,11$. Then for any four product vectors $\ket{g_{m_1},h_{m_1},f_{m_1}},\ket{g_{m_2},h_{m_2},f_{m_2}},\ket{g_{m_3},h_{m_3},f_{m_3}}$ and $\ket{g_{m_4},h_{m_4},f_{m_4}}$, there exists a product vector $\ket{g_{m_1}',a_{m_2}',b_{m_3}',c_{m_4}',x_1}$ such that it is orthogonal to the four product vectors, where $m_1,m_2,m_3,m_4\in\{1,\cdots,11\}$ and they are distinct. 
	We consider whether $\ket{x_1}$ is orthogonal to the remaining seven $\ket{f_j}$, $j\in\{1,\cdots,11\}\backslash\{m_1,m_2,m_3,m_4\}$. Since $\ket{f_j}\in\mathbb{C}^{2^{3}}$, in space $\mathbb{C}^{2^3}$, there are eight orthonormal vectors. Then there exists a vector $\ket{x_1}$ such that $\ket{x_1}$ is orthogonal to the remaining seven $\ket{f_j}$. So  $\ket{g_{m_1}',a_{m_2}',b_{m_3}',c_{m_4}',x_1}$ is orthogonal to all rows of $A$. The case $(1,3,3)$ does not hold.
	
	For the case $(2,2,3)$, we assume that the product vectors of $A$ in \eqref{uom:1}  are  $\ket{g_j,h_j,f_j}\in\mathbb{C}^{2^{2}}\otimes\mathbb{C}^{2^{2}}\otimes\mathbb{C}^{2^{3}}$ with $\ket{g_j}=\ket{a_j,b_j}$ and  $\ket{h_j}=\ket{c_j,d_j}, j=1,\cdots,11$.  Then for any four product vectors $\ket{g_{m_1},h_{m_1},f_{m_1}},\ket{g_{m_2},h_{m_2},f_{m_2}},\ket{g_{m_3},h_{m_3},f_{m_3}}$ and $\ket{g_{m_4},h_{m_4},f_{m_4}}$, there exists a product vector $\ket{a_{m_1}',b_{m_2}',c_{m_3}',d_{m_4}',x_1}$ such that it is orthogonal to the four product vectors, where $m_1,m_2,m_3,m_4\in\{1,\cdots,11\}$ and they are distinct. 
	We consider whether $\ket{x_1}$ is orthogonal to the remaining seven $\ket{f_j}$, $j\in\{1,\cdots,11\}\backslash\{m_1,m_2,m_3,m_4\}$. Since $\ket{f_j}\in\mathbb{C}^{2^{3}}$, in space $\mathbb{C}^{2^3}$, there are eight orthonormal vectors. Then there exists a vector $\ket{x_1}$ such that $\ket{x_1}$ is orthogonal to the remaining $\ket{f_j}$. So  $\ket{a_{m_1}',b_{m_2}',c_{m_3}',d_{m_4}',x_1}$ is orthogonal to the product vectors of $A$. The case $(2,2,3)$ does not hold.
	
	According to above cases, the $7$-qubit UPB of size $11$ is not a UPB in any three partitions.
\end{proof}

\begin{lemma}
	\label{le:two}
	Then $7$-qubit UPB is not a UPB in any bipartitions.
\end{lemma}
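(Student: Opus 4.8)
The plan is to reduce the statement to pure dimension counting. Up to a permutation of the seven systems, a bipartition groups the qubits into blocks of sizes $k$ and $7-k$ with $1\le k\le 3$, so the $7$-qubit UPB of size $11$ becomes a set of $11$ product vectors in $\mathbb{C}^{2^k}\otimes\mathbb{C}^{2^{7-k}}$; the point is that the larger factor always has dimension $2^{7-k}\ge 2^4=16>11$.

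First I would write the $11$ product vectors as $\ket{g_j}\otimes\ket{h_j}$, $j=1,\dots,11$, with $\ket{g_j}\in\mathbb{C}^{2^k}$ and $\ket{h_j}\in\mathbb{C}^{2^{7-k}}$. Since $\dim\lin\{\ket{h_1},\dots,\ket{h_{11}}\}\le 11<2^{7-k}$, the orthogonal complement of this span contains a unit vector $\ket{x}$, and then $\ket{g}\otimes\ket{x}$ is orthogonal to all $11$ product vectors for any unit $\ket{g}\in\mathbb{C}^{2^k}$. Hence the set admits an orthogonal product vector and is not a UPB in this bipartition; letting $k$ run over $1,2,3$ and using the fact that permuting systems preserves the UPB property covers all bipartitions.

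An even shorter route uses Lemma \ref{le:7}: every bipartition of the seven qubits is a coarsening of a tripartition, since the larger block (which has at least four systems) can be split into two nonempty parts; a product vector orthogonal to all rows of $A$ in \eqref{uom:1} with respect to the finer tripartition---which exists by Lemma \ref{le:7}---stays a product vector, hence an orthogonal one, after regrouping the blocks, so it also witnesses that the bipartition is not a UPB.

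I do not anticipate a genuine obstacle here: in contrast with Lemmas \ref{le:2}, \ref{le:3} and \ref{thm:main}, no analysis of the multiplicities of the entries in the columns of $A$ is required, because $11$ lies well below the bipartite UPB threshold $2^k+2^{7-k}-1\ge 23$. The only points to double-check are the enumeration of bipartitions up to permutation and that $\ket{x}$ can indeed be taken nonzero, which is immediate from $2^{7-k}>11$.
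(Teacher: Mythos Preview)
Your proposal is correct and your primary argument is exactly the paper's own: enumerate the bipartitions up to permutation as $(1,6)$, $(2,5)$, $(3,4)$, observe that the larger factor has dimension $2^{7-k}\ge 16>11$, and pick a vector in the orthogonal complement of the eleven second-tensor-factors. Your second route via coarsening from Lemma~\ref{le:7} is a valid extra argument that the paper does not give for the bipartite case (though it uses the same coarsening idea in Lemma~\ref{le:6}).
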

\begin{proof}
	For bipartitions of the $7$-qubit UPB of size $11$ in \eqref{uom:1}, up to systems permutation, we obtain that the number of merged systems in bipartition is $(1,6),(2,5)$ or $(3,4)$. Suppose that the product vectors of $A$ in bipartition are $\ket{a_i,b_i},i=1,\cdots,11$. We consider whether there exists a product vector $\ket{\phi_1,\phi_2}$ such that it is orthogonal to $\ket{a_i,b_i}$. For case $(3,4)$, we have $\ket{\phi_1,\phi_2}\in \mathbb{C}^{2^3}\otimes\mathbb{C}^{2^4}$. Since the dimension of space $\mathbb{C}^{2^4}$ is $16>11$, there exists a $\ket{\phi_2}\in\mathbb{C}^{2^4}$ such that it is orthogonal to all $\ket{b_i}$. So we find a product vector $\ket{\phi_1,\phi_2}$.
	
	If one of the dimensions of $\ket{\phi_1},\ket{\phi_2}$ is larger than $11$, then there exists a $\ket{\phi_1,\phi_2}$ such that it is orthogonal to all $\ket{a_i,b_i}$. One can verify that the three cases $(1,6),(2,5),(3,4)$ all hold. So the $7$-qubit UPB of size $11$ is not a UPB in any bipartitions. 
\end{proof}

\bibliographystyle{unsrt}

\bibliography{20210826con}

\end{document}